\documentclass[final,3p,times]{elsarticle}
\usepackage{subcaption}
\usepackage{amsthm}
\usepackage{amsfonts}
\usepackage{mathtools}
\newtheorem{theorem}{Theorem}
\newtheorem{lemma}[theorem]{Lemma}

\usepackage{array}
\usepackage{float}
\usepackage{multirow}
\usepackage{hhline}
\usepackage{color}
\usepackage{booktabs}
	
\usepackage{amssymb}

\usepackage{amsmath}

\journal{Performance Evaluation}
\biboptions{sort&compress}

\begin{document}
	
	\begin{frontmatter}
		
		\title{Energy-performance tradeoffs in server farms with batch services and setup times}
		
		\author[label1]{Thu Le-Anh}
		\affiliation[label1]{organization={Graduate School of Science and Technology, University of Tsukuba},
			addressline={1-1-1 Tennodai},
			city={Tsukuba},
			postcode={305-8573},
			country={Ibaraki, Japan}}
		
		\author[label2]{Tuan Phung-Duc\corref{cor1}}
		\cortext[cor1]{Corresponding author.}
		\ead{tuan@sk.tsukuba.ac.jp}
		\affiliation[label2]{organization={Institute of Systems and Information Engineering, University of Tsukuba},
			addressline={1-1-1 Tennodai},
			city={Tsukuba},
			postcode={305-8573},
			country={Ibaraki, Japan}}
		
		\begin{abstract}
			
			Data centers consume a large amount of energy, much of which is wasted due to idle servers. Turning off idle servers might be an effective power-saving solution; however, there is a trade-off between energy savings and system performance. Hence, we propose a setup queueing model with a batching policy that allows servers to process a set of jobs simultaneously to minimize power consumption while maintaining acceptable performance. We consider an M/M/$c/$SET--BATCH queue, a multi-server batch service queue with a fixed batch size and setup times, and some variants, including systems in which idle servers delay before turning off or systems in which the batch size is dynamic. We analyze the steady-state probabilities and system performance of the M/M/$c/$SET--BATCH system and its variants. Our analysis of the M/M/$c/$SET--BATCH system with lower computational complexity is made possible by utilizing the special structure of the model. In addition, we use simulations to compare the M/M/$c/$SET--BATCH model with some other variants with different setup time distributions. The results suggest that the model performs better when the setup time has a larger coefficient of variation. Our results indicate that the batching policy enhances the system performance, especially when we allow servers to be idle before turning them off. 
		\end{abstract}
		
		\begin{keyword}
			Queueing analysis \sep Setup times \sep Batch services \sep Delay off \sep Energy consumption
			
		\end{keyword}
		
	\end{frontmatter}
	
	\section{Introduction}
	Data centers are major energy consumers and waste a substantial portion. In data centers, servers are busy only 10\%--30\% of the time; when idle, a server consumes around 60\% of its peak power usage \cite{barroso2007case}. Thus, turning idle servers off is an effective solution to save power. However, off servers need a considerable setup time to restart, during which they consume energy without processing jobs. Consequently, turning off idle servers makes a trade-off between energy conservation and system performance.
	
	In data centers, the setup time is typically 200 s, whereas the typical service time is less than 1 s \cite{gandhi2013exact}. Hence, many authors have proposed methods to minimize the number of times servers need to enter setup mode. The three main lines of research involve policies to delay turning off an idle server, called delayoff, policies to reduce setup times by using low-power sleep modes, and policies to restrict the number of servers that can be in setup mode simultaneously, called \textit{s}-staggered setup. Similar to the above research lines, this paper aims to minimize the number of times servers undergo setup by proposing a group processing policy. The group processing or batching policy allows a server to process a set of jobs simultaneously instead of a single job.
	
	Batch processing, particularly micro-batch processing, plays a vital role in data centers by optimizing resource utilization. Traditional batch processing systems such as Hadoop and Spark process large datasets in predefined batches over time, while modern micro-batching with tools such as Apache Spark, Apache Flink, Fluentd, and Logstash enables frequent small-group task execution \cite{carbone2015apache}. Hence, micro-batching processing supports near real-time processing compared with stream processing in real-time. Batch processing enables parallel task processes and supports job scheduling and ETL (extract, transform, and load) pipelines in cloud platforms like Amazon Web Services (AWS) and Google Cloud Batch \cite{carbone2015apache, wen2019energy}. A common situation in which the batching policy can yield benefits is when jobs are divided into families with similar characteristics (a family scheduling model). The main reasons for batch processing are to avoid server setups and optimize server utilization. Family scheduling or workflow scheduling with batch processing is typically enabled in cloud computing environments and data center networks that exist in E-commerce \cite{wen2019energy}. In addition, batch processing occurs frequently in transportation platforms, testing centers, or manufacturing companies where a number of items are processed simultaneously in batches. A typical example of batch processing can be observed in ride-sharing platforms such as Uber and Lyft, which are employed to optimize ride matching and route scheduling. By aggregating multiple ride requests within a short time (micro-batching), the systems determine optimal driver-passenger pairings, thereby reducing operational costs while maintaining service rates.
	
	Despite the critical applications of both setup times and batch services in various systems such as data centers, transportation platforms and manufacturing systems, most papers separately analyze the effect of setup times \cite{gandhi2010server, phung2017exact, maccio2018asymptotic, maccio2018structural} and batch services \cite{claeys2011analysis, baetens2019analysis, nain2021one, dudin2024analysis} in queueing systems. Some related models combine batch services and vacations without setup \cite{sikdar2008batch,gupta2020finite, saxena2018modeling}. It should be noted that the vacation schedule is independent, while setup rates depend on the queue length, making the model with setup times more complex. In \cite{arumuganathan2005steady}, a setup is taken into account in a single-server queueing model with batch services and vacations where a server takes multiple vacations of random duration when the queue length falls below a threshold. Upon returning from a vacation, the server requires a setup time before starting service if the queue length meets or exceeds the threshold. In contrast, our model allows the setup process to start immediately once the queue length reaches the threshold, regardless of vacation. 
	
	The most closely related work is \cite{le2023performance}, where the authors analyze a finite-buffer queueing model with setup times and batch services. They provide a simple analysis of the finite-buffer queue, which is limited to small-scale systems due to computational complexity. In practice, almost all job queues have a certain capacity, even in data centers with many servers. The queues are also typically constrained by a reasonable buffer size to ensure quality of service (QoS) regarding response times. Thus,  it is essential to compute performance measures for queueing systems with finite buffers. However, as the buffer capacity increases, calculating the joint stationary distribution becomes laborious due to the growing number of boundary equations. Although buffers are finite, they are often large, making it reasonable to approximate a finite-buffer system by the corresponding infinite-buffer system. Hence, we propose to analyze an infinite-buffer M/M/$c$/Setup queue with the batching policy and use its results to approximate the corresponding system with finite buffers. We present an evaluation of the efficiency of our approximation compared with the exact one for the finite-buffer queue in terms of time and accuracy.
	
	In addition, although many theoretical works have investigated the effect of setup times in multi-server systems, most studies have assumed that setup times are exponentially distributed. In reality, setup times in many systems are less variable than this, and deterministic setup times are common \cite{williams2022m}. For example, the setup time for servers in the data center in \cite{gandhi2012autoscale} was constant. The simulation results in \cite{le2023performance} showed that the negative effect of setup times on delay in a model with deterministic setup was greater than in an exponential setup time model. Accordingly, the assumption that setup times are i.i.d.\ exponentially distributed may lead to an underestimation of the effect of setup time on system performance. As a result, we first analyze the system performance under a Markovian setting; we then use simulations to study three other cases in which the setup time follows a deterministic, hyperexponential, and  Erlang distribution. 
	
	Finally, we consider some variants of our initial model with a delayoff policy that allows an idle server to delay before shutdown and a dynamic batch-size policy that allows the batch size to vary depending on the load. For the delayoff policy, although allowing idle times before shutdown can reduce the number of servers in setup mode, it consumes energy for idle servers. For the dynamic batch-size policy, the batch size can vary within a specific range depending on load instead of being fixed as in our original model. In this case, a server processes up to a certain number of jobs simultaneously, satisfying the upper and lower bounds of the server size (batch size). Intuitively, the dynamic batch-size policy is flexible, but studies have yet to analyze its effectiveness in a setup queueing model. Accordingly, analyzing and comparing the above models allows us to find optimal policies when system parameters change. Our main contributions are summarized below:
	
	1. A setup queueing model with a batching policy and several variant models are proposed to balance the trade-off between system performance and power consumption in data centers.
	
	2. An exact recursive algorithm is presented to compute effectively the joint stationary distribution and performance measures based on the special structure of the rate matrices in the setup queueing model with the batching policy.
	
	3. The efficiency of our method is evaluated by comparing its computational complexity with that of the direct matrix inversion method for computing the rate matrices, as well as by assessing the accuracy and computational complexity of our approximation for the finite-buffer case.
	
	4. The optimal policies corresponding to different system parameters are chosen based on numerical experiments and simulations.
	
	The rest of this paper is organized as follows. Section \ref{sec2} discusses some related works and Section \ref{sec3} presents a formal definition of the models.  In Section \ref{sec4} and \ref{sec5}, we summarize the main results including the analysis of the steady-state probabilities and the system performance, respectively. Section~\ref{sec6} is devoted to the numerical experiments and simulations, and concluding remarks are presented in Section \ref{sec7}.
	
	\section{Related works}\label{sec2}
	In this section, we discuss papers that analyze the effect of the setup time and the batch service on the performance of multi-server queueing systems. Although we have found no analysis of multi-server queues with batch services and setup times, queueing models with setups and queueing models with batch services have been extensively investigated.
	
	\subsection{Batch service without setup}
	Bailey \cite{bailey1954queueing} was the first to study a single-server queueing system with a batch service. This work analyzed an M/G/1 queue in which the services are provided in batches of varying sizes. Since then, many papers have been published about batch services; see, for example, \cite{claeys2011analysis, baetens2019analysis, nain2021one, dudin2024analysis, adan2000multi, gai2024waiting}.
	In most of batch service queues, customers are served in batches of sizes varying within a certain range, and this service rule is referred to by Neuts \cite{neuts1967general} as the ``general bulk service'' (GBS) rule.  Several studies have proposed combining batch services with the GBS rule and vacations, where a server enters a vacation if the queue length is less than a threshold corresponding to the minimum batch size. The server resumes service upon the completion of the vacation, provided the queue length exceeds this threshold \cite{sikdar2008batch,gupta2020finite, saxena2018modeling}.
	
	\subsection{M/M/$c$ with setup}
	Several studies have considered the optimal policies of multi-server setup queues for energy-efficient management of data centers \cite{gandhi2010server, phung2017exact, maccio2018asymptotic, maccio2018structural}. Gandhi et al.\ \cite{gandhi2010server} presented various approximations for setup queues in which a server is turned off immediately when idle. These approximations work well when load or setup times or both are low. Gandhi et al. \cite{gandhi2013exact} developed a new analysis technique, called recursive renewal reward (RRR) approach to give the first exact, closed-form analysis for the M/M/$c$/Setup queue. Phung-Duc \cite{phung2017exact} comprehensively compared the matrix analytic approach and the generating function method and derived the exact solutions as in \cite{gandhi2013exact}.
	
	\subsection{\textit{s}-staggered setup}
	The staggered setup model allows only one server to be in setup mode at a time, simplifying the analysis. In \cite{artalejo2005analysis}, Artalejo et al.\ analyzed a multi-server queue with a staggered setup policy and derived the stationary queue length distribution using difference equations. The model in \cite{artalejo2005analysis} was extended to take into account batch arrivals in~\cite{PhungDuc2014Staggered}.
	
	Many papers have considered the \textit{s}-staggered setup policy \cite{gandhi2010server,phung2016energy}, which extends the staggered policy by limiting the number of setup servers at a time to \textit{s}. In \cite{gandhi2010server}, Gandhi et al.\ presented the first analysis of the response time distribution and decomposition results for the mean response time. In \cite{phung2016energy}, Phung-Duc and Kawanishi presented a simple recursion extending \cite{PhungDuc2015MMcK} to obtain the steady-state stationary distribution of a multi-server queue with $s$-staggered policy and abandonment.
	
	\subsection{M/M/$c$/Setup/Delayoff}
	The delayoff policy stems from the simple idea that when a server is idle, the server will wait for some time instead of shutting down immediately. Recently, there have been a small number of studies analyzing a delayoff policy in M/M/$c$/Setup queueing models. Gandhi et al.\ \cite{gandhi2013exact} considered an M/M/$c$/Setup model with the delayoff policy based on the RRR approach and developed an approximation for choosing the amount of time before turning an idle server off. In \cite{pender2016law}, Pender and Phung-Duc analyzed an M/M/$c$/Setup--Delay model incorporating a stochastic number of servers, a non-stationary arrival rate, and job abandonment. The authors obtained a mean field approximation for the mean queue length and the mean number of non-idle servers and showed that they converge as the number of servers and the arrival rate approach infinity.
	
	\subsection{M/M/$c$/Setup/Sleep}
	The M/M/$c$/Setup model with sleep mode is motivated by the idea that idle servers can be put to sleep or turned off. Although a server that is turned off does not consume power, it requires a considerable setup time. By contrast, a server in sleep mode consumes less power than the idle power while requiring a setup time typically smaller than the setup time of a powered-down server \cite{meisner2009powernap, gandhi2012sleep}. Therefore, there exists a trade-off between putting an idle server to sleep and shutting it down. In 2009, Eisner \cite{meisner2009powernap} presented a new energy-saving method called Powernap, designed to quickly switch between a high-performance active state and a low-power nap state in response to load. Their experimental results based on utilization traces collected from real data centers showed that PowerNap reduces energy consumption by 74\% on average. Gandhi et al.\ \cite{gandhi2013exact} used the RRR method to give an explicit closed-form solution for the M/M/$c$/Setup/Sleep queue. It can be used to find exact solutions for many Markov chains beyond M/M/$c$/Sleep such as M/E\textsubscript{2}/$c$, in which service times follow the Erlang-2 distribution, and M\textsuperscript{2}/M/$c$, in which jobs come in batches. 
	\subsection{Deterministic setup times}
	Despite the importance of deterministic setup times, there has been little research on the M/M/$c$/Setup--deterministic queue. Recently, William et al. \cite{williams2022m} presented the first analysis of this queue and derived a lower bound for the mean waiting time. The bound is highly accurate when the setup rate is much lower than the service rate. The M/M/$c$/Setup queue with deterministic setup times has also been analyzed via simulations in \cite{le2023performance,kara2017energy}. In \cite{le2023performance}, the authors considered a finite-buffer multi-server setup queue with a batching policy in which the setup time is deterministic or follows an Erlang, exponential, or hyperexponential distribution. Their simulation results show that the performance of the model is better when the setup time has a larger coefficient of variation. These studies agree that the adverse effect of setup times on the mean delay in the M/M/$c$/Setup--deterministic model is much greater than in the corresponding model with exponential setup times. Other than the above, progress has yet to be made in multi-server queueing systems with deterministic setup times.
	
	\section{Model}\label{sec3}
	
	In our models, there are $c$ servers, jobs arrive according to a Poisson process with rate $\lambda$, and they are grouped and processed in batches. The service time of batches follows an exponential distribution with mean $1/\mu$. Since a server requires some random setup time to transition from ``off" to ``on," we assume that the setup times are exponentially distributed with mean $1/\alpha$. 
	\subsection{ M/M/$c$/SET--BATCH}
	We now formally describe the M/M/$c$/SET--BATCH model, in which SET and BATCH stand for ``setup'' and ``batching policy'', respectively. In this model, a server will turn off if no batches request service upon service completion. Otherwise, it immediately takes the waiting batch at the head of the queue to process. At that time, the server with the largest remaining setup time is turned off if the number of batches left in the queue is less than the number of setup servers, that is, we are setting up more servers than needed. 
	
	Let $i$, $j$, and $b$ denote the number of active servers, the number of jobs in the system, and the batch size, respectively. Then, the number of setup servers is given by min$(\lfloor\frac{j-ib}{b}\rfloor,c-i)$. The number of active servers cannot exceed the number of batches in the system and thus each server has three states: ``setup,'' ``busy,'' and ``off.''  We assume that servers in this system operate in a first come, first served (FCFS) manner. Fig.~\ref{Fig1} shows the state transition diagram for the case $b=2$. 
	
	For the finite-buffer case, the M/M/$c/K$/SET--BATCH model, the setting is identical except that the number of jobs in the system is limited to a maximum of $K$, for $K \geq cb$.
	\begin{figure}
		\centering   \includegraphics[width=0.8\textwidth]{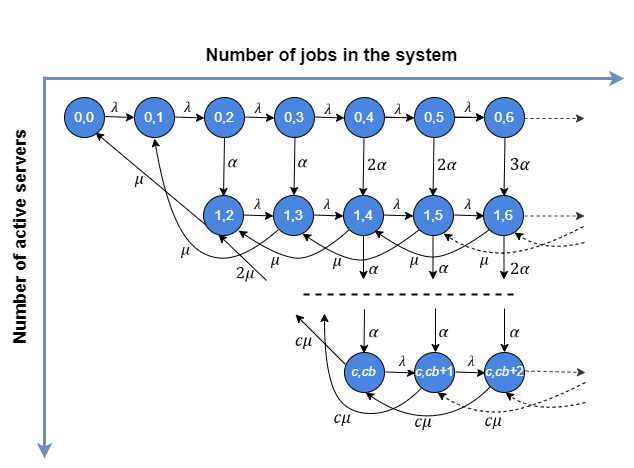}
		\caption{State transition diagram for M/M/$c$/SET--BATCH queue ($b=2$).}
		\label{Fig1}
	\end{figure}
	\subsection{ M/M/$c$/SET--BATCH/Delayoff}
	M/M/$c$/SET--BATCH/Delayoff is the same as M/M/$c$/SET--BATCH, except that idle servers are not immediately turned off. We assume that when a server completes service and there are no batches waiting in the queue, the server will remain in the idle state for an exponentially distributed amount of time with mean $1/\beta$. During that time, if no batches request service, the idle server will turn off. Otherwise, the idle server instantly takes the incoming batch to process.
	
	Let $i$, $j$, and $b$ denote the number of busy or idle servers,
	the number of jobs in the system, and the batch size, respectively.
	Fig.~\ref{Fig2} shows the state transitions for the
	M/M/$c$/SET--BATCH/Delayoff system in the case $b=2$.
	The number of setup servers is given by
	$\max(\min(\lfloor\frac{j-ib}{b}\rfloor,c-i),0)$
	and the number of idle servers is
	$\max(i-\lfloor\frac{j}{b}\rfloor,0)$.
	Specifically, if $j\geq(i+1)b$, the number of setup servers is
	$\min(\lfloor\frac{j-ib}{b}\rfloor,c-i)$
	and there are no idle servers.
	If $j<ib$ (gray states), the number of idle servers is
	$i-\lfloor\frac{j}{b}\rfloor$ and there are no setup servers.
	If $ib\leq j<(i+1)b$, there are no idle or setup servers.
	
	\begin{figure}
		\centering   \includegraphics[width=0.8\textwidth]{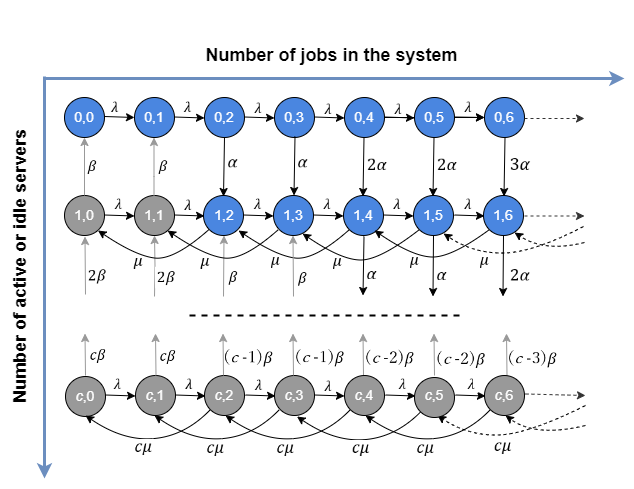}
		\caption{State transition diagram for M/M/$c$/SET--BATCH/Delayoff queue ($b=2$).}
		\label{Fig2}
	\end{figure}
	\subsection{ M/M/$c$/SET--BATCH/Dynamic}
	The M/M/$c$/SET--BATCH/Dynamic model is similar to the M/M/$c$/SET--BATCH model, except that the batch size is not fixed but varies within a certain range ($b_1$ to $b_2$). In particular, a server will be set up if enough $b_1$ customers are waiting in the queue. It also assumes an exponentially distributed setup time, as in the M/M/$c$/SET--BATCH model. However, upon the completion of the setup process, the server can serve as many customers as possible provided that it does not exceed $b_2$. After the service is completed, the server will turn off if there are fewer than $b_1$ customers in the queue. Otherwise, it will immediately serve as many customers as possible but no larger than $b_2$. Therefore, the batch size can vary between $b_1$ and $b_2$ in the M/M/$c$/SET--BATCH/Dynamic model.
	
	Fig.~\ref{Fig3} shows the state transitions for the M/M/$c$/SET--BATCH/Dynamic model in the case $b_1=2$ and $b_2=3$. Let $i$ and $j$ denote the number of active servers and the number of jobs in the queue, respectively. The number of setup servers is given by min$(\lfloor\frac{j}{b_2}\rfloor + \text{min}(\lfloor\frac{j \mod b_2}{b_1}\rfloor, 1),c-i)$.
	\begin{figure}
		\centering   \includegraphics[width=0.75\textwidth]{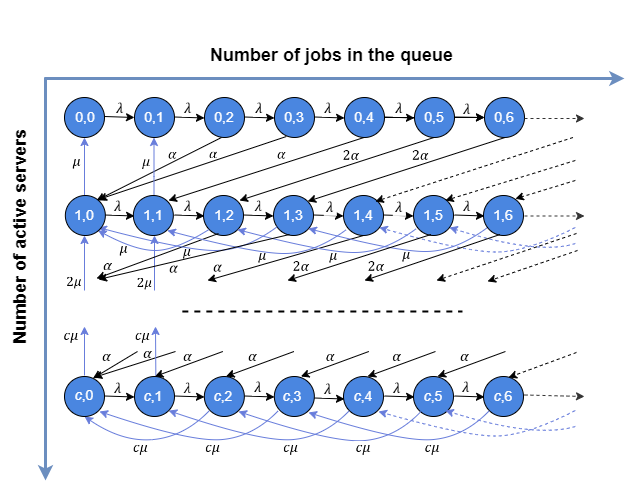}
		\caption{State transition diagram for M/M/$c$/SET--BATCH/Dynamic queue ($b_1=2,$ $b_2=3$).}
		\label{Fig3}
	\end{figure}
	
	\section{The quasi-skip-free process (QSF)}\label{sec4}
	In this section, we calculate the joint stationary distribution of the Markov processes to obtain the corresponding stationary performance measures of the three models defined in Section \ref{sec3}. For the M/M/$c$/SET--BATCH model, we derive a recursive algorithm to calculate the stationary distribution and performance measures using a matrix analytic approach. In this model, all non-diagonal elements of the rate matrix $R$ in homogeneous part are explicitly expressed in terms of the diagonal elements, which are the unique root of a polynomial (see Eq.~(\ref{eq:2})), instead of solving the matrix equation as in general matrix analytic methods. Furthermore, the boundary part is solved using rate matrices which are efficiently obtained as in Section \ref{sec:4.1.2}. For the M/M/$c$/SET--BATCH/Delayoff and M/M/$c$/SET--BATCH/Dynamic models, we use a standard matrix geometric method as in Section \ref{sec:4.2} and Section \ref{sec:4.3}. All models above can be formulated as a quasi-skip-free (QSF) process to the right. The QSF Markov chains capture the dynamics of batch service queues where a server can process multiple jobs simultaneously. The jump rates in the QSF process are not allowed to cross two or more levels away from the current state in one direction of the level; that is, the right in the case of our models 
	\cite{neuts1994matrix}. 
	\subsection{M/M/$c$/SET--BATCH}
	In the M/M/$c$/SET--BATCH model, all the servers will eventually be active if there are enough jobs in the system, so the system stability condition is $\lambda < cb\mu$. Let $B(t)$ and $N(t)$ denote the number of busy servers and the number of jobs in the system at time $t$, respectively. In the current setting, the two-dimensional process $\{X(t)=(B(t), N(t));t\geq0\}$ forms a Markov chain in the state space 
	\begin{equation*}
		S=\{(i,j);i=0,1,\ldots,c,j=ib,ib+1,\ldots\}.
	\end{equation*}
	The system can be formulated as a quasi-skip-free process (QSF) with the infinitesimal generator $Q$ given as
	
	\begin{equation*}
		Q=\left(\begin{array}{cccccccccccc}
			Q_{0}^{(0)} & Q_{1}^{(0)} & \mathit{O} & \mathit{O} & \mathit{O} & \mathit{O} & \cdots \\
			\mathit{O} & Q_{0}^{(1)} & Q_{1}^{(1)} & \mathit{O} & \mathit{O} & \mathit{O} & \cdots \\
			\vdots & \ddots & \ddots & \ddots & \ddots & \ddots & \cdots \\
			Q_{-b}^{(b)} & \mathit{O} & \cdots &  Q_{0}^{(b)} & Q_{1}^{(b)}  & \ddots & \cdots \\
			\mathit{O} & Q_{-b}^{(b+1)} & \mathit{O} & \cdots & Q_{0}^{(b+1)} & Q_{1}^{(b+1)} & \cdots \\
			\vdots & \vdots & \vdots & \vdots & \vdots & \vdots & \ddots 
		\end{array}\right) \text { , }
	\end{equation*}
	where $\mathit{O}$ denotes the zero matrix with an appropriate dimension. We use the notation $Q_{1}^{(i)}$, $Q_{0}^{(i)}$ and $Q_{-b}^{(i)}$ to describe the “up” (to level $i+1$), “within” (level $i$),  and “down” (to level $i-b$) transition rate sub-matrices in relation to the current level $i$, respectively. The block matrices $Q_{1}^{(i)} (i\leq cb-1)$, $Q_{0}^{(i)} (i\leq cb-1)$, and $Q_{-b}^{(i)}$ $(b\leq i\leq cb-1)$, are $(\lfloor \dfrac{i}{b} \rfloor+1) \times(\lfloor \dfrac{i+1}{b} \rfloor+1)$, $(\lfloor \dfrac{i}{b} \rfloor+1) \times(\lfloor \dfrac{i}{b} \rfloor+1)$, and $(\lfloor \dfrac{i}{b} \rfloor+1) \times(\lfloor \dfrac{i}{b} \rfloor)$ matrices whose contents are given by
	\begin{equation*}
		Q_{1}^{(i)}= diag(\lambda, \lambda, \ldots, \lambda),
	\end{equation*}
	
	\begin{equation*}
		Q_{0}^{(i)}=\left(\begin{array}{ccccc}
			-q_{0}^{(i)} & \lfloor \dfrac{i}{b} \rfloor \alpha & 0 & \cdots & 0 \\
			0 & -q_{1}^{(i)} & (\lfloor \dfrac{i}{b} \rfloor-1) \alpha  & \ddots & \vdots \\
			\vdots & \ddots & \ddots & \ddots & 0\\
			\vdots & \vdots & \ddots & -q_{\lfloor i/b \rfloor-1}^{(i)} & \alpha\\
			0 & \cdots & \cdots & 0 & -q_{\lfloor i/b \rfloor}^{(i)}
		\end{array}\right),
	\end{equation*}
	where $q_{j}^{(i)}=\lambda+j\mu+(\lfloor \dfrac{i}{b} \rfloor-j) \alpha$, for $j=0,1,\ldots,\lfloor \dfrac{i}{b} \rfloor$,
	\begin{equation*}
		Q_{-b}^{(i)}=\left(\begin{array}{cccc}
			0 & 0 & \cdots & 0 \\
			0 & \mu & \ddots & \vdots \\
			\vdots & \ddots & \ddots & 0\\
			\vdots & \cdots & 0 & \lfloor \dfrac{i}{b} \rfloor\mu\\
			0 & \cdots & 0 & (\lfloor \dfrac{i}{b} \rfloor+1)\mu
		\end{array}\right).
	\end{equation*}
	The block matrix $Q_{-b}^{(i)}$ $(cb\leq i < cb+b)$ is $(c+1)\times c$ matrix whose contents are given by
	\begin{equation*}
		Q_{-b}^{(i)}=\left(\begin{array}{cccc}
			0 & 0 & \cdots & 0 \\
			0 & \mu & \ddots & \vdots \\
			\vdots & \ddots & \ddots & 0\\
			\vdots & \cdots & 0 & (c-1)\mu\\
			0 & \cdots & 0 & c\mu
		\end{array}\right).
	\end{equation*}
	Furthermore, the block matrices $Q_{1}^{(i)}(i\geq cb)$, $Q_{0}^{(i)}(i\geq cb)$, and $Q_{-b}^{(i)}$ $(i\geq cb+b)$ are independent of $i$ and are given as follows: 
	\begin{equation*}
		Q_{1}^{(i)}=Q_{1}= \lambda I, \quad
		Q_{-b}^{(i)}=Q_{-b}=diag(0, \mu, \ldots, c \mu),
	\end{equation*}
	\begin{equation*}
		Q_{0}^{(i)}=Q_{0}=\left(\begin{array}{ccccc}
			-q_{0} & c\alpha & 0 & \cdots & 0 \\
			0 & -q_{1} & (c-1)\alpha  & \ddots & \vdots \\
			\vdots & \ddots & \ddots & \ddots & 0\\
			\vdots & \vdots & \ddots & -q_{c-1} & \alpha\\
			0 & \cdots & \cdots & 0 & -q_{c}
		\end{array}\right),
	\end{equation*}
	where $q_{j}=\lambda+j\mu+(c-j) \alpha$, for $j=0,1,\ldots,c$. Let
	\begin{align*}
		\boldsymbol{\pi}_{i}&=(\pi_{0,i}, \pi_{1, i}, \ldots, \pi_{min (\lfloor i/b \rfloor, c), i}), \quad \text{for } i\in \mathbb{Z_{+}}=\{0,1,2,\ldots\},\\
		\boldsymbol{\pi}&=(\boldsymbol{\pi}_{0}, \boldsymbol{\pi}_{1}, \ldots).  
	\end{align*}
	Then, the stationary distribution $\boldsymbol{\pi}$ is the unique solution of 
	\begin{align*}
		\boldsymbol{\pi} Q&= \textbf{0},\\
		\boldsymbol{\pi} \boldsymbol{e}&= 1, 
	\end{align*}
	where \textbf{0} is a row vector of zeros and $\boldsymbol{e}$ is a column vector of ones with an appropriate dimension. According to the matrix analytic method \cite{neuts1994matrix}, we have
	\begin{equation*}
		\boldsymbol{\pi}_{i}=\boldsymbol{\pi}_{i-1}R^{(i)}, \quad \text{for } i\in \mathbb{N}=\{1,2,3,\ldots\},
	\end{equation*}
	and $\boldsymbol{\pi}_{0}$ is the solution of the boundary equation and the normalization condition:
	\begin{alignat*}{3}
		\boldsymbol{\pi}_{0}(Q^{(0)}_{0}+\prod_{i=1}^{b} R^{(i)}Q^{(b)}_{-b}) &=\textbf{0}, \\  
		\boldsymbol{\pi}_{0}(I+R^{(1)}+R^{(1)}R^{(2)}+\ldots) \boldsymbol{e} &=1,
	\end{alignat*}
	where $\mathit{I}$ is the $(c + 1) \times (c + 1)$ identity matrix, and $\{ R^{(i)}; i \in \mathbb{N} \}$ is the minimal non-negative solution of
	\begin{equation*}
		Q_{1}^{(i-1)}+ R^{(i)} Q_{0}^{(i)}+\prod_{j=0}^{b} R^{(i+j)}Q_{-b}^{(i+b)}=\mathit{O}.
	\end{equation*}
	
	We further consider the finite-buffer case. With the same setting as in the M/M/$c$/SET--BATCH model above, the two-dimensional process $\{X(t)=(B(t), N(t));t\geq0\}$ forms a Markov chain in the state space 
	\begin{equation*}
		S=\{(i,j);i=0,1,\ldots,c,j=ib,ib+1,\ldots, K\}.
	\end{equation*}
	The system can be formulated as a QSF process with the infinitesimal generator $Q$ given as
	\begin{equation*}
		Q=\left(\begin{array}{cccccccccccc}
			Q_{0}^{(0)} & Q_{1}^{(0)} & \mathit{O} & \mathit{O} & \mathit{O} & \cdots & \mathit{O} \\
			\mathit{O} & Q_{0}^{(1)} & Q_{1}^{(1)} & \mathit{O} & \mathit{O} & \cdots  & \mathit{O} \\
			\vdots & \ddots & \ddots & \ddots & \ddots & \ddots & \vdots \\
			Q_{-b}^{(b)} & \mathit{O} & \cdots &  Q_{0}^{(b)} & Q_{1}^{(b)}  & \ddots & \mathit{O}\\
			\vdots & \ddots & \ddots & \ddots & \ddots & \ddots & \vdots\\
			
			\mathit{O} & \cdots & Q_{-b}^{(K)} & \cdots & \cdots & Q_{0}^{(K)} & Q_{1}^{(K)} 
		\end{array}\right) \text { , }
	\end{equation*}
	where the block matrices $Q_{1}^{(i)} (i\leq K)$, $Q_{0}^{(i)} (i\leq K)$, and $Q_{-b}^{(i)}$ $(b\leq i\leq K)$ are similar to the corresponding block matrices in the infinite-buffer case. Let
	\begin{align*}
		\boldsymbol{\pi}_{i}&=(\pi_{0,i}, \pi_{1, i}, \ldots, \pi_{min (\lfloor i/b \rfloor, c), i}), \quad \text{for } i=0,1,2,\ldots,K,\\
		\boldsymbol{\pi}&=(\boldsymbol{\pi}_{0}, \boldsymbol{\pi}_{1}, \ldots, \boldsymbol{\pi}_{K}).  
	\end{align*}
	According to the matrix analytic method \cite{neuts1994matrix}, the subvectors $\boldsymbol{\pi}_{0}, \boldsymbol{\pi}_{1}, \ldots, \boldsymbol{\pi}_{K}$ are the solution of the boundary equations and the normalization condition:
	\begin{alignat*}{3}
		\boldsymbol{\pi}_{0}Q^{(0)}_{0}+\boldsymbol{\pi}_{b}Q^{(b)}_{-b} &=\textbf{0}, \\
		\boldsymbol{\pi}_{0}Q^{(0)}_{1}+\boldsymbol{\pi}_{1}Q^{(1)}_{0}+\boldsymbol{\pi}_{b+1}Q^{(b+1)}_{-b} &=\textbf{0},\\
		& \vdotswithin{=}\\
		\boldsymbol{\pi}_{K-1}Q^{(K-1)}_{1}+\boldsymbol{\pi}_{K}Q^{(K)}_{0} &=\textbf{0},\\
		(\boldsymbol{\pi}_{0}+\boldsymbol{\pi}_{1}+ \ldots +\boldsymbol{\pi}_{K}){\boldsymbol{e}} &=1.
	\end{alignat*}
	For this case, we need to solve $K+1$ vector boundary equations, but these are reduced to $cb$ vector equations in the infinite-buffer case. Hence, to reduce the computational complexity, we approximate the finite-buffer system by the infinite-buffer system, which is analyzed in the following subsections.
	
	\subsubsection{Homogeneous part}
	It should be noted that $Q_{1}^{(i)}=Q_{1}$ $(i\geq cb)$, $Q_{0}^{(i)}=Q_{0}$ $(i\geq cb)$, and $Q_{-b}^{(i)}=Q_{-b}$ $(i\geq cb+b)$. Hence, we have $R^{(i)}=R$ for $i\geq cb+1$, where $R$ is the minimal nonnegative solution of the following equation:
	\begin{equation}
		Q_{1}+RQ_{0}+R^{b+1}Q_{-b}=\mathit{O}, \label{eq:1}
	\end{equation}
	It can be seen that $R$ is an upper triangular matrix, that is, $R(i,j)= r_{i,j}$ $(j\geq i)$ and  $R(i,j)= 0$ $(j<i)$, because $Q_{0},$ $Q_{1},$ and $Q_{-b}$ are upper triangular matrices. Similar structures for the matrix $R$ are observed in the models presented in \cite{neuts1994matrix,phung2017exact,armony2019exact,hanukov2021explicit} for $b=1$. Taking the diagonal part of (\ref{eq:1}), we have
	\begin{align}
		\lambda - (\lambda + i\mu+ (c-i)\alpha)r_{i,i}+ i\mu r^{b+1}_{i,i} = 0, \quad i=0,1,\ldots, c-1, c. \label{eq:2}
	\end{align}
	Because $R$ is the minimal nonnegative solution of (\ref{eq:1}), we have to choose the smallest root for $r_{i,i}$. 
	
	\begin{lemma} \label{lem:1}
		Eq.~(\ref{eq:2}) has a unique root satisfying $0<r_{i,i}<1$.
	\end{lemma}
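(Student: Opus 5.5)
We plan to study, for each fixed $i$, the polynomial
\begin{equation*}
f_i(r)=\lambda-(\lambda+i\mu+(c-i)\alpha)\,r+i\mu\, r^{b+1}, \qquad r\ge 0,
\end{equation*}
using only elementary facts about it: its sign at $r=0$ and $r=1$, its convexity, and, in the case $i=c$, the stability condition $\lambda<cb\mu$. First, $f_i(0)=\lambda>0$ for every $i$. Second, $f_i''(r)=(b+1)b\,i\mu\, r^{b-1}\ge 0$ on $[0,\infty)$, so $f_i$ is convex there, strictly convex when $i\ge 1$. Hence $f_i$ has at most two zeros on $(0,\infty)$. We split into three cases according to the value of $f_i(1)=-(c-i)\alpha$.

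\emph{Case $i=0$.} Here $f_0$ is linear, and its unique root is $r_{0,0}=\lambda/(\lambda+c\alpha)$. This lies in $(0,1)$ because $\alpha>0$ and $c\ge 1$.

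\emph{Case $1\le i\le c-1$.} Now $f_i(1)=-(c-i)\alpha<0$ while $f_i(0)>0$, so by the intermediate value theorem there is a root in $(0,1)$. A convex function that is positive at $0$ and negative at $1$ is strictly decreasing wherever it is positive on $[0,1]$. Equivalently, by convexity the chord argument shows $f_i$ cannot return to zero inside $(0,1)$ after first crossing. Hence this root is unique in $(0,1)$. Convexity together with $f_i(r)\to+\infty$ as $r\to\infty$ places the second root in $(1,\infty)$.

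\emph{Case $i=c$.} This is the main obstacle, since here $f_c(1)=0$, so $r=1$ is itself a root. We need to show the other root lies strictly inside $(0,1)$, and this is where the stability condition $\lambda<cb\mu$ enters. We compute
\begin{equation*}
f_c'(1)=-(\lambda+c\mu)+(b+1)c\mu=cb\mu-\lambda>0.
\end{equation*}
Therefore $f_c(r)<0$ for $r$ slightly less than $1$. Combined with $f_c(0)=\lambda>0$, this yields a root in $(0,1)$. Since the strictly convex $f_c$ has at most two positive roots and one of them is $r=1$, the root in $(0,1)$ is unique.

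In every case the root in $(0,1)$ is also the smallest nonnegative root. This matches the choice of $r_{i,i}$ dictated by minimality of $R$ in~(\ref{eq:1}), and completes the plan for Lemma~\ref{lem:1}.
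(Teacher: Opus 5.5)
Your proof is correct, and it takes a different route from the paper. The paper applies Rouch\'e's theorem on the unit circle with $f(z)=\lambda+i\mu z^{b+1}$ and $g(z)=(\lambda+i\mu+(c-i)\alpha)z$. It asserts $|f|<|g|$ on $|z|=1$ and concludes that $f-g$, like $g$, has exactly one zero in the unit disk. You instead stay on the real line and use the signs at $0$ and $1$, strict convexity, and the intermediate value theorem.

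\begin{itemize}
\item \textbf{What Rouch\'e buys.} For $i<c$ it gives a stronger conclusion than the lemma needs: uniqueness among all complex roots in the disk. It still implicitly needs a real-variable step, such as $h(0)>0>h(1)$, to show that this zero is real and lies in $(0,1)$.
\item \textbf{What your route buys.} It is more elementary, and it handles $i=c$ correctly. There $|f(1)|=|g(1)|=\lambda+c\mu$, so the strict inequality Rouch\'e requires fails on the contour. The paper's argument never uses $\lambda<cb\mu$, yet that condition is necessary: if $\lambda\ge cb\mu$, then $f_c(r)\ge f_c'(1)(r-1)\ge 0$ on $[0,1)$, so $f_c$ has no root in $(0,1)$.
\item \textbf{The key computation.} Your $f_c'(1)=cb\mu-\lambda>0$ is exactly what is needed. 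The Rouch\'e route could be repaired with the same inequality by shrinking the contour to $|z|=1-\epsilon$, where $|g|-\max|f|=(cb\mu-\lambda)\epsilon+O(\epsilon^2)>0$.
\item \textbf{Link to minimality.} You also show the root in $(0,1)$ is the smallest nonnegative root. This is what connects the lemma to the minimality of $R$ and to the fixed-point iteration of Lemma~\ref{lem:2}.
\end{itemize}
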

	The proof of Lemma \ref{lem:1} is deferred to~\ref{Ad:1}.
	Multiplying (\ref{eq:2}) by $(i\mu)^{-1}$ and rearranging the result, we have
	\begin{align}
		r_{i,i} = p + qr^{b+1}_{i,i} \label{eq:3}
	\end{align}
	where $p=\frac{\lambda}{\lambda + i\mu+ (c-i)\alpha}$ and $q=\frac{i\mu}{\lambda + i\mu+ (c-i)\alpha}$.
	Then, we can obtain the minimal nonnegative solution of (\ref{eq:3}) from the recursive sequence
	\begin{align}
		x_{0}=0, \quad x_{n+1}= p+qx_{n}^{b+1}. \label{eq:4}
	\end{align}
	\begin{lemma}\label{lem:2}
		The recursive sequence (\ref{eq:4}) is monotonically increasing and bounded above by 1. Thus, we have $\lim_{n \to \infty}x_n=r_{i,i}$.
	\end{lemma}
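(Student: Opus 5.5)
We argue by induction on $n$, working with the map $f(x)=p+qx^{b+1}$ on $[0,1]$. Since $p,q\ge 0$, the map $f$ is nondecreasing on $[0,\infty)$. Moreover,
\[
p+q=\frac{\lambda+i\mu}{\lambda+i\mu+(c-i)\alpha}\le 1,
\]
so $f(1)=p+q\le 1$. (For $i=c$ we have $f(1)=1$ exactly.) The plan has three steps: show the iterates increase, show they stay below the root $r_{i,i}$ (hence below 1), and then pass to the limit.

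\emph{Monotonicity.} The base case is $x_1=p>0=x_0$. For the inductive step, assume $x_n\ge x_{n-1}$. Since $f$ is nondecreasing, $x_{n+1}=f(x_n)\ge f(x_{n-1})=x_n$.

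\emph{Boundedness.} Let $r=r_{i,i}\in(0,1)$ be the root from Lemma~\ref{lem:1}, so that $f(r)=r$. We show $x_n\le r$ for all $n$ by induction. The base case is $x_0=0\le r$. If $x_n\le r$, then $x_{n+1}=f(x_n)\le f(r)=r<1$. This gives the bound by 1 claimed in the statement, and the sharper bound by $r$.

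\emph{Limit.} A monotone bounded sequence converges, so $x_n\to x^*\le r$. By continuity of $f$, $x^*=f(x^*)$, and $x^*\ge x_1=p>0$. Multiplying by $\lambda+i\mu+(c-i)\alpha$ recovers Eq.~(\ref{eq:2}) at $x^*$. Hence $x^*$ is a root of Eq.~(\ref{eq:2}) lying in $(0,r]\subset(0,1)$. By the uniqueness in Lemma~\ref{lem:1}, $x^*=r_{i,i}$. As a byproduct, this argument also shows that $r_{i,i}$ is the smallest nonnegative fixed point of $f$, which justifies choosing it for the minimal solution $R$.

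The only delicate point is the degenerate case $i=0$. There $q=0$, so Eq.~(\ref{eq:3}) is not obtained by dividing by $i\mu$. Instead, Eq.~(\ref{eq:2}) is linear and gives $r_{0,0}=p$, and the sequence is constant, $x_n=p$ for $n\ge1$. The conclusion still holds, provided $\alpha>0$ so that $p<1$.

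The main obstacle is ensuring that the limit is the root in $(0,1)$ rather than some other fixed point, for instance $x=1$ when $i=c$. The comparison $x_n\le r$ handles this. If one preferred not to rely on Lemma~\ref{lem:1}, the alternative is to use convexity of $f(x)-x$ together with $f(0)-0>0$ and $f(1)-1\le 0$ to locate the smallest fixed point directly.
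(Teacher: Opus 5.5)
Your proof is correct, and it departs from the paper's at the one step that actually matters. The paper proves strict monotonicity from $x_{k+2}-x_{k+1}=q(x_{k+1}^{b+1}-x_k^{b+1})$ and the bound from $x_{k+1}<p+q<1$, neither of which uses Lemma~\ref{lem:1}. It then merely asserts that the limit is $r_{i,i}$. You instead compare the iterates with the fixed point itself, proving $x_n\le r_{i,i}$ from the monotonicity of $f(x)=p+qx^{b+1}$. You then identify the limit through continuity and the uniqueness part of Lemma~\ref{lem:1}.

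The paper's bound has the advantage of being independent of Lemma~\ref{lem:1}, but a bound by $1$ cannot identify the limit when $i=c$. There $p+q=1$, so the paper's inequality $p+q<1$ is in fact an equality, and $x=1$ is itself a root of Eq.~(\ref{eq:2}). Nothing in the paper's argument therefore excludes $x_n\to 1$. Your comparison with $r_{i,i}$ closes exactly this gap. Because the same induction works against any nonnegative fixed point, it also shows that the limit is the smallest nonnegative root, which is what justifies selecting that root for the minimal solution $R$.

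Three small remarks. First, your weak inequality $x_{n+1}\ge x_n$ is the right formulation: for $i=0$ one has $q=0$ and the paper's strict claim fails ($x_2=x_1=p$). Second, the case $i=0$ needs no separate treatment. Eq.~(\ref{eq:3}) is really obtained by dividing Eq.~(\ref{eq:2}) by $\lambda+i\mu+(c-i)\alpha$, not by $i\mu$, so it holds with $q=0$ and your general argument already covers it. Third, if you pursued the convexity alternative instead of Lemma~\ref{lem:1}, then for $i=c$ you would need the stability condition $\lambda<cb\mu$, equivalently $(b+1)q>1$. This is what ensures the smallest fixed point lies strictly below $1$.
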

	The proof of Lemma \ref{lem:2} is given in~\ref{Ad:2}.
	
	We now shift to the non-diagonal elements $r_{i,j}$ $(j>i)$. Taking the $(i,j)$ element in (\ref{eq:1}), we obtain
	\begin{align*}
		&(c-j+1)\alpha r_{i,j-1}- (\lambda+(c-j)\alpha+j\mu)r_{i,j} \\
		&+j\mu \sum_{h_{b}=i}^{j}\sum_{h_{b-1}=i}^{h_{b}}\ldots \sum_{h_{1}=i}^{h_{2}}r_{i,h_{1}}r_{h_{1},h_{2}}\ldots r_{h_{b-1},h_{b}}r_{h_{b},h_{j}}=0.
	\end{align*}
	For $j=i+1$, we have
	\begin{align*}
		&(c-i)\alpha r_{i,i}- (\lambda+(c-i-1)\alpha+(i+1)\mu)r_{i,j} \\
		&+(i+1)\mu r_{i, i+1}\sum_{k=0}^{b}r_{i,i}^{b-k}r_{i+1, i+1}^{k}=0.
	\end{align*}
	Thus,
	\begin{align*}
		&r_{i, i+1}=\frac{(c-i)\alpha r_{i,i}}{\lambda+(c-i-1)\alpha+(i+1)\mu-(i+1)\mu\sum_{k=0}^{b}r_{i,i}^{b-k}r_{i+1, i+1}^{k}}, \\ 
		&\text{for }i=0,1,\ldots, c-1.
	\end{align*}
	For the general case, we obtain
	\begin{align*}
		&r_{i, j}=\frac{(c-j+1)\alpha r_{i,j-1}+j\mu A_{r}}{\lambda+(c-j)\alpha+j\mu - j\mu\sum_{k=0}^{b}r_{i,i}^{b-k}r_{i+1, i+1}^{k}}, \quad \text{for }j > i+1,
	\end{align*}
	where 
	\begin{align*}
		A_{r}=\sum_{k=1}^{b}\sum_{h_{k}=i+1}^{j-1}\sum_{h_{k-1}=i}^{h_{k}}\ldots \sum_{h_{1}=i}^{h_{2}}r_{i,h_{1}}r_{h_{1},h_{2}}\ldots r_{h_{k},j}r_{j,j}^{b-k}.
	\end{align*}
	From these recursive formulas, we can obtain the elements of the rate matrix from the diagonal part and subsequently the upper diagonal parts.
	\subsubsection{Non-homogeneous part}\label{sec:4.1.2}
	Because $R^{(i)}=R$ $(i=cb+1, cb+2,\ldots)$, we only need to find $R^{(i)}$ $(i=cb,cb-1,\ldots, 1)$, which can be derived using the following formula:
	\begin{align*}
		R^{(i)} = -Q_{1}^{(i-1)}(Q_{0}^{(i)}+ \prod_{j=1}^{b} R^{(i+j)}Q_{-b}^{(i+b)})^{-1}, 
	\end{align*}
	for $i=cb, cb-1, \ldots, 1.$ We can obtain $R^{(i)}$ $(i=cb, cb-1,\ldots, 1)$ from the following system of linear equations:
	\begin{align*}
		R^{(i)}(Q_{0}^{(i)}+\prod_{j=1}^{b} R^{(i+j)}Q_{-b}^{(i+b)}) = -Q_{1}^{(i-1)}, 
	\end{align*}
	for $i=cb, cb-1, \ldots, 1.$ 
	Because the rate matrices are upper trapezoidal matrices, this system of linear equations can be efficiently solved as follows. We first need to solve the following equation:
	\begin{align*}
		XA = -Q_{1}^{(i-1)},
	\end{align*}
	where $A= Q_{0}^{(i)}+\prod_{j=1}^{b} R^{(i+j)}Q_{-b}^{(i+b)}$. $A$ and $X$ are upper trapezoidal matrices of size and $(\lfloor\frac{i}{b}\rfloor+1)\times(\lfloor\frac{i}{b}\rfloor+1)$ and $(\lfloor\frac{i-1}{b}\rfloor+1)\times(\lfloor\frac{i}{b}\rfloor+1)$, respectively. Let $x_{j}=(0, 0,\ldots, x_{j,j}, x_{j, j+1}, \ldots, x_{j, \lfloor\frac{i}{b}\rfloor})$ $(j= 0,1,\ldots, \lfloor\frac{i-1}{b}\rfloor)$ denote the $j$th row vector of $X$. The above equation is equivalent to
	\begin{align*} 
		x_{j}A= (0, 0, \ldots, -\lambda, 0, \ldots, 0), \quad j= 0,1,\ldots, \lfloor\frac{i-1}{b}\rfloor,
	\end{align*}
	where the $-\lambda$ is the $(j+1)$th entry of the vector in the right hand side. The solution of this equation is given as follows:
	\begin{align}\label{eq5}
		x_{j,j}= -\frac{\lambda}{a_{j,j}}, \quad x_{j,h}= -\frac{\sum_{k=j}^{h-1}x_{j,k}a_{k,h}}{a_{h,h}}, 
	\end{align}
	for $h= j+1, j+2,\ldots, \lfloor\frac{i}{b}\rfloor,$ where $a_{i,j}$ is the $(i, j)$ entry of $A$.
	\subsection{ M/M/$c$/SET--BATCH/Delayoff}\label{sec:4.2}
	The Markov chain for M/M/$c$/SET--BATCH/Delayoff is shown in Fig.~\ref{Fig2}. Similar to the M/M/$c$/SET--BATCH model, the system stability condition is $\lambda < cb\mu$. Let $A(t)$ and $N(t)$ denote the number of busy or idle servers and the number of jobs in the system at time $t$, respectively. In the current setting, the two-dimensional process $\{X(t)=(A(t), N(t));t\geq0\}$ forms a Markov chain in the state space 
	\begin{equation*}
		S=\{(i,j);i=0,1,\ldots,c,j=0,1,2,\ldots\}.
	\end{equation*}
	The system can be formulated as a Quasi-Skip-Free process (QSF) with the infinitesimal of $\{X(t)\}$ given as
	\begin{equation*}
		Q=\left(\begin{array}{cccccccccccc}
			Q_{0}^{(0)} & Q_{1} & \mathit{O} & \mathit{O} & \mathit{O} & \mathit{O} & \cdots \\
			\mathit{O} & Q_{0}^{(1)} & Q_{1} & \mathit{O} & \mathit{O} & \mathit{O} & \cdots \\
			\vdots & \ddots & \ddots & \ddots & \ddots & \ddots & \cdots \\
			Q_{-b} & \mathit{O} & \cdots &  Q_{0}^{(b)} & Q_{1}  & \ddots & \cdots \\
			\mathit{O} & Q_{-b} & \mathit{O} & \cdots & Q_{0}^{(b+1)} & Q_{1} & \cdots \\
			\vdots & \vdots & \vdots & \vdots & \vdots & \vdots & \ddots 
		\end{array}\right) \text { , }
	\end{equation*}
	where $\mathit{O}$ denotes the zero matrix with an appropriate dimension. The block matrix $Q_{0}^{(i)} (i<b)$ is $(c+1) \times(c+1)$ matrix whose contents are given by
	\begin{equation*}
		Q_{0}^{(i)} = \left(\begin{array}{ccccc}
			-q_{0} & 0 & \cdots & 0 \\
			\beta & -q_{1} & \ddots & \vdots \\
			\vdots & \ddots & \ddots & \vdots\\
			0 & \cdots & c\beta & -q_{c}
		\end{array}\right),
	\end{equation*}
	where $q_{j}=\lambda+j\beta$, for $j=0,1,\ldots,c$.
	
	The block matrix $Q_{0}^{(i)} (i\geq b)$ is $(c+1) \times(c+1)$ matrix whose contents are given by
	\begin{equation*}
		Q_{0}^{(i)}=\left(\begin{array}{ccccc}
			-q_{0}^{(i)} & m_{0}^{(i)} & 0 & \cdots & 0 \\
			n_{1}^{(i)} & -q_{1}^{(i)} & m_{1}^{(i)}  & \ddots & \vdots \\
			\vdots & \ddots & \ddots & \ddots & \vdots\\
			\vdots & \ddots & n_{c-1}^{(i)} & -q_{c-1}^{(i)} & m_{c-1}^{(i)}\\
			0 & \cdots & 0 & n_{c}^{(i)} & -q_{c}^{(i)}
		\end{array}\right),
	\end{equation*}
	where $m_{j}^{(i)} =\text{max}(\lfloor \dfrac{i}{b} \rfloor-j, 0)\alpha$, for $j=0,1,\ldots,c-1$, \quad $n_{j}^{(i)}=\text{max}(j-\lfloor \dfrac{i}{b} \rfloor, 0) \beta$, for $j=1,\ldots,c$, 
	and \quad $q_{j}^{(i)}=\lambda+j\mu+m_{j}^{(i)}+n_{j}^{(i)}$, for $j=0,1,\ldots,c$.
	
	Furthermore, the block matrices $Q_{1}$, $Q_{-b}$, and $Q_{0}^{(i)} (i\geq cb)$ are independent of $i$ and are given as follows:
	\begin{equation*}
		Q_{1}= \lambda I,\quad
		Q_{-b}= diag(0, \mu, \ldots, c\mu), 
	\end{equation*}
	\begin{equation*}
		Q_{0}^{(i)}=Q_{0}=\left(\begin{array}{ccccc}
			-q_{0} & c\alpha & 0 & \cdots & 0 \\
			0 & -q_{1} & (c-1)\alpha  & \ddots & \vdots \\
			\vdots & \ddots & \ddots & \ddots & 0\\
			\vdots & \cdots & \ddots & -q_{c-1} & \alpha\\
			0 & \cdots & \cdots & 0 & -q_{c}
		\end{array}\right),
	\end{equation*}
	where $q_{j}=\lambda+j\mu+(c-j) \alpha$, for $j=0,1,\ldots,c$. Let
	\begin{align*}
		\boldsymbol{\pi}_{i}&=(\pi_{0,i}, \pi_{1, i}, \ldots, \pi_{c, i}), \quad \text{for } i\in \mathbb{Z_{+}}=\{0,1,2,\ldots\},\\
		\boldsymbol{\pi}&=(\boldsymbol{\pi}_{0}, \boldsymbol{\pi}_{1}, \ldots).  
	\end{align*}
	Then, the stationary distribution $\boldsymbol{\pi}$ is the unique solution of the following equations:
	\begin{align*}
		\boldsymbol{\pi} Q&= \textbf{0}, \\
		\boldsymbol{\pi} \boldsymbol{e}&= 1, 
	\end{align*}
	where \textbf{0} is a row vector of zeros and $\boldsymbol{e}$ is a column vector of ones with an appropriate dimension. According to the matrix analytic method \cite{neuts1994matrix}, we have
	\begin{equation*}
		\boldsymbol{\pi}_{i}=\boldsymbol{\pi}_{cb}R^{i-cb}, \quad \text{for } i\geq cb,
	\end{equation*}
	and the subvectors $\boldsymbol{\pi}_{0}, \boldsymbol{\pi}_{1}, \ldots, \boldsymbol{\pi}_{cb}$ are the solution of the boundary equations and the normalization condition:
	\begin{alignat*}{3}
		\boldsymbol{\pi}_{0}Q^{(0)}_{0}+\boldsymbol{\pi}_{b}Q_{-b} &=\textbf{0}, \\
		\boldsymbol{\pi}_{0}Q_{1}+\boldsymbol{\pi}_{1}Q^{(1)}_{0}+\boldsymbol{\pi}_{b+1}Q_{-b} &=\textbf{0},\\
		& \vdotswithin{=}\\
		\boldsymbol{\pi}_{cb-1}Q_{1}+\boldsymbol{\pi}_{cb}(Q^{(cb)}_{0}+R^{b}Q_{-b}) &=\textbf{0},\\
		(\sum_{i=0}^{cb-1}\boldsymbol{\pi}_{i}+\boldsymbol{\pi}_{cb}(I-R)^{-1})\boldsymbol{e} &=1,
	\end{alignat*}
	where $R$ is the minimal non-negative solution of 
	\begin{equation*}
		Q_{1}+ R Q_{0}+R^{b+1} Q_{-b}=\mathit{O}.
	\end{equation*}
	It should be noted that this equation is the same as (\ref{eq:1}).
	
	\subsection{ M/M/$c$/SET--BATCH/Dynamic}\label{sec:4.3}
	The M/M/$c$/SET--BATCH/Dynamic chain shown in Fig.~\ref{Fig3} is analyzed similarly to the M/M/$c$/SET--BATCH. The system stability condition is $\lambda < cb_2\mu$. Let $B(t)$ and $N(t)$ denote the number of busy servers and the number of jobs in the queue at time $t$, respectively. In the current setting, the two-dimensional process $\{X(t)=(B(t), N(t));t\geq0\}$ forms a Markov chain in the state space 
	\begin{equation*}
		S=\{(i,j);i=0,1,\ldots,c,j=0,1,2,\ldots\}.
	\end{equation*}
	The infinitesimal generator $Q$ of $\{X(t)\}$ is given as
	\begin{equation*}
		Q=\left(\begin{array}{cccccccccccc}
			Q_{0}^{(0)} & Q_{1} & \mathit{O} & \cdots & \cdots & \cdots &  \cdots & \cdots & \cdots \\
			\mathit{O} & Q_{0}^{(1)} & Q_{1} & \mathit{O} & \cdots & \cdots & \cdots & \cdots & \cdots \\
			\vdots & \ddots & \ddots & \ddots & \ddots & \cdots & \cdots & \cdots & \cdots \\
			Q_{-b_1}^{(b_1)} & \mathit{O} & \cdots &  Q_{0}^{(b_1)} & Q_{1}  & \mathit{O} & \cdots & \cdots & \cdots \\
			\vdots & \vdots & \ddots & \ddots & \ddots & \ddots & \ddots & \cdots & \cdots \\
			Q_{-b_2}^{(b_2)} & \mathit{O} & \mathit{O} & \cdots & \cdots & Q_{0}^{(b_2)} & Q_{1} & \mathit{O} & \cdots \\
			\mathit{O} & Q_{-b_2}^{(b_2+1)} & \mathit{O} & \cdots & \cdots & \cdots & Q_{0}^{(b_2+1)} & Q_{1} & \cdots \\
			\vdots & \vdots & \vdots & \vdots & \vdots & \vdots & \vdots & \vdots & \ddots 
		\end{array}\right) \text { , }
	\end{equation*}
	where $\mathit{O}$ denotes the zero matrix with an appropriate dimension. The block matrix $Q_{0}^{(i)} (i \leq b_1-1)$ is $(c+1) \times(c+1)$ matrix whose contents are given by
	\begin{equation*}
		Q_{0}^{(i)} = \left(\begin{array}{ccccc}
			-\lambda & 0 & \cdots & 0  \\
			\mu & -(\lambda+\mu) & \ddots & 0 \\
			\vdots & \ddots & \ddots & \ddots\\
			0 & \cdots & c\mu & -(\lambda+c\mu)
		\end{array}\right).
	\end{equation*}
	The block matrices $Q_{-i}^{(i)}$ $(b_1 \leq i< b_2)$ and $Q_{0}^{(i)}$ $(b_1 \leq i < b_2)$ are $(c+1) \times(c+1)$ matrices whose contents are given by
	\begin{equation*}
		Q_{-i}^{(i)} = \left(\begin{array}{ccccc}
			0 & \alpha & 0 & \cdots  \\
			0 & \mu & \alpha & \ddots \\
			\vdots & \ddots & \ddots & \ddots\\
			0 & \cdots & 0 & c\mu
		\end{array}\right),
	\end{equation*}
	\begin{equation*}
		Q_{0}^{(i)} = \left(\begin{array}{ccccc}
			-(\lambda+\alpha) & 0 & \cdots & 0  \\
			0 & -(\lambda+\mu+\alpha) & \ddots & \vdots \\
			\vdots & \ddots & \ddots & \vdots\\
			0 & \cdots & 0 & -(\lambda+c\mu)
		\end{array}\right),
	\end{equation*}
	
	The block matrices $Q_{-b_2}^{(i)}$ $(i \geq b_2)$ and $Q_{0}^{(i)}$ $( b_1 \leq i < b_1+(c-1)b_2)$ are $(c+1) \times(c+1)$ matrices whose contents are given by
	\begin{equation*}
		Q_{-b_2}^{(i)} = \left(\begin{array}{ccccc}
			0 & q_{0} & 0 & \cdots  \\
			0 & \mu & q_{1} & \ddots \\
			\vdots & \ddots & \ddots & \ddots\\
			0 & \cdots & 0 & c\mu
		\end{array}\right),
	\end{equation*}
	where $q_{j}= \text{min}(\lfloor\frac{i}{b_2}\rfloor + \text{min}(\lfloor\frac{i \mod b_2}{b_1}\rfloor, 1),c-j) \alpha$, for $j=0,1,\ldots, c-1.$
	\begin{equation*}
		Q_{0}^{(i)} = \left(\begin{array}{ccccc}
			-q_{0} & 0 & \cdots & \cdots  \\
			0 & -q_{1} & 0 & \ddots \\
			\vdots & \ddots & \ddots & \ddots\\
			0 & \cdots & 0 & -q_{c}
		\end{array}\right),
	\end{equation*}
	where $q_j= \lambda+ j\mu+ \text{min}(\lfloor\frac{i}{b_2}\rfloor + \text{min}(\lfloor\frac{i \mod b_2}{b_1}\rfloor, 1),c-j) \alpha$.
	
	Moreover, the block matrices $Q_{1}$, $Q_{-b_2}$ $(i \geq b_1+(c-1)b_2$), and $Q_{0}^{(i)}$ $(i\geq b_1+(c-1)b_2)$ are independent of $i$ and are given as follows:
	\begin{equation*}
		Q_{1}= \lambda I,
	\end{equation*}
	\begin{equation*}
		Q_{-b_2}^{(i)} = Q_{-b_2}=\left(\begin{array}{ccccc}
			0 & c\alpha & 0 & \cdots & 0 \\
			0 & \mu & (c-1)\alpha  & \ddots & \vdots \\
			\vdots & \ddots & \ddots & \ddots & 0\\
			\vdots & \cdots & \ddots & (c-1)\mu & \alpha\\
			0 & \cdots & \cdots & 0 & c\mu
		\end{array}\right),
	\end{equation*}
	\begin{equation*}
		Q_{0}^{(i)}=Q_{0}=\left(\begin{array}{ccccc}
			-(\lambda+c\alpha) & 0 & \cdots & 0 \\
			0 & -(\lambda+(c-1)\alpha+\mu) & \ddots & \vdots \\
			\vdots & \ddots & \ddots & 0\\
			0 & \cdots & 0 & -(\lambda+c\mu)
		\end{array}\right).
	\end{equation*}
	Let
	\begin{align*}
		\boldsymbol{\pi}_{i}&=(\pi_{0,i}, \pi_{1, i}, \ldots, \pi_{c, i}), \quad \text{for } i\in \mathbb{Z_{+}}=\{0,1,2,\ldots\},\\
		\boldsymbol{\pi}&=(\boldsymbol{\pi}_{0}, \boldsymbol{\pi}_{1}, \ldots).  
	\end{align*}
	Then, the stationary distribution $\boldsymbol{\pi}$ is the unique solution of 
	\begin{align}
		\boldsymbol{\pi} Q&= \textbf{0}, \label{eq:7}\\
		\boldsymbol{\pi} \boldsymbol{e}&= 1, \label{eq:8}
	\end{align}
	where \textbf{0} is a row vector of zeros and $\boldsymbol{e}$ is a column vector of ones with an appropriate dimension. According to the matrix analytic method \cite{neuts1994matrix}, we have
	\begin{equation*}
		\boldsymbol{\pi}_{i}=\boldsymbol{\pi}_{b_1+(c-1)b_2}R^{i-b_1-(c-1)b_2}, \quad \text{for } i\geq b_1+(c-1)b_2,
	\end{equation*}
	where $R$ is the minimal non-negative solution of
	\begin{equation*}
		Q_{1}+ R Q_{0}+R^{b_2+1} Q_{-b_2}=\mathit{O}.
	\end{equation*}
	It should be noted that the rate matrix $R$ can be solved using similar formulas as in (\ref{eq:2}). Similar to the M/M/$c$/SET--BATCH/Delayoff model, once we have determined $R$, the subvectors $\boldsymbol{\pi}_{0}, \boldsymbol{\pi}_{1}, \ldots, \boldsymbol{\pi}_{b_1+(c-1)b_2}$ can be solved by using the first $(b_1+(c-1)b_2+1)$ equations of (\ref{eq:7}) and the normalizing equation (\ref{eq:8}).
	
	\section{Performance measures}\label{sec5}
	
	\subsection{M/M/$c$/SET--BATCH}\label{sec:5.1}
	Let ${\pi}_{i}$ denote the stationary probability that there are $i$ active servers, that is, ${\pi}_{i}= \sum_{j=ib}^{\infty}\pi_{i,j}.$ Let $B$ and $S$ denote the number of busy servers and setup servers in the system in the steady state, respectively. Then, the average number of busy servers E$[B]$ and setup servers E$[S]$ are given by
	\begin{align*}
		\text{E}[B]&= \sum_{i=1}^{c}i{\pi}_{i},\\
		\text{E}[S]&= \sum_{i=0}^{c} \sum_{j=ib+b}^{\infty}\text{min}(\lfloor\frac{j-ib}{b}\rfloor,c-i)\pi_{i,j}.
	\end{align*}
	Let $Sr$ be the number of completed switches from OFF
	to ON per unit time. Thus, the mean number of switches from OFF to ON per unit time in the steady state E$[Sr]$ is as follows:
	\begin{align*}
		\text{E}[Sr]&= \sum_{i=0}^{c-1} \sum_{j=ib+b}^{\infty}\text{min}(\lfloor\frac{j-ib}{b}\rfloor,c-i)\alpha\pi_{i,j}.
	\end{align*}
	Let $N$ denote the number of jobs in the system. Then, the mean number of customers in the system E$[N]$ is given by
	\begin{equation*}
		\text{E}[N]=\sum_{i=0}^{c} \sum_{j=ib}^{\infty}j\pi_{i,j}.
	\end{equation*}
	Let $W$ denote the total time that a job spends in the system (the response time). Then, from Little's law, the mean response time E$[W]$ can be obtained by
	\begin{equation*}
		\text{E}[W]=\frac{\text{E}[N]}{\lambda}.
	\end{equation*}
	The mean power consumption is
	\begin{equation*}
		Cost=\mu\left(C_b\text{E}[B]+C_s\text{E}[S]\right).
	\end{equation*}
	Here $C_b$ represents the energy consumed per batch (in joules),
	so that $\mu C_b$ is the mean power consumption of a busy server. For $C_b$, we use the linear cost function in \cite{inoue2021queueing} for GPU-based inference servers (Tesla V100). The cost function agrees closely with the measurement results of the computing performance for Tesla V100 (the coefficient of determination $R^{2}\simeq 0.99978$). We assume that a server in setup consumes the same mean
	power as a busy server, so that $C_s=C_b$. Hence,
	\begin{align*}
		C_b= C_s= 0.04419b+0.15503.
	\end{align*}
	
	Furthermore, to evaluate the trade-off between energy consumption and performance, we use the energy--delay product (EDP) metric, also known as the energy--response time product (ERP) in \cite{gandhi2010optimality}. The ERP is given by
	\begin{equation*}
		\text{ERP}=\text{E}[W]\times Cost.
	\end{equation*}
	
	\subsection{M/M/$c$/SET--BATCH/Delayoff}
	Let ${\pi}_{i}$ denote the stationary probability that there are $i$ busy or idle servers, that is, ${\pi}_{i}= \sum_{j=0}^{\infty}\pi_{i,j}.$ Let $I$, $B$, and $S$ denote the number of idle servers, busy servers, and setup servers in the system, respectively. Similar to the M/M/$c$/SET--BATCH, the average number of idle servers E$[I]$, busy servers E$[B]$, setup servers E$[S]$, and the mean number of switches per unit time E$[Sr]$ are as follows:	
\begin{align*}
	\text{E}[I]
	&= \sum_{i=1}^{c}\sum_{j=0}^{ib-1}
	\left(i-\lfloor\frac{j}{b}\rfloor\right)\pi_{i,j},\\
	\text{E}[B]
	&= \sum_{i=1}^{c}i\pi_i
	- \sum_{i=1}^{c}\sum_{j=0}^{ib-1}
	\left(i-\lfloor\frac{j}{b}\rfloor\right)\pi_{i,j},\\
	\text{E}[S]
	&= \sum_{i=0}^{c}\sum_{j=ib+b}^{\infty}
	\min\left(\lfloor\frac{j-ib}{b}\rfloor,c-i\right)
	\pi_{i,j},\\
	\text{E}[Sr]
	&= \sum_{i=0}^{c-1}\sum_{j=ib+b}^{\infty}
	\min\left(\lfloor\frac{j-ib}{b}\rfloor,c-i\right)
	\alpha\pi_{i,j}.
\end{align*}
	Moreover, the mean number of customers in the system E$[N]$ and the mean response time E$[W]$ are given by
	\begin{align*}
		\text{E}[N]= \sum_{i=0}^{c} \sum_{j=0}^{\infty}j\pi_{i,j}, \quad
		\text{E}[W]= \frac{\text{E}[N]}{\lambda}.
	\end{align*}
	The mean power consumption is 
	\begin{equation*}
		Cost=\mu\left(
		C_b\text{E}[B]+C_s\text{E}[S]+C_i\text{E}[I]
		\right),
	\end{equation*}
	where $C_b$ and $C_s$ are the same as in the
	M/M/$c$/SET--BATCH model. Here, $\mu C_i$ is the power consumption of an idle server. We assume that idle-server power is 60\% of mean busy-server
	power, i.e., $C_i=0.6C_b$. Furthermore, the ERP is given by
	\begin{equation*}
		\text{ERP}=\text{E}[W]\times Cost.
	\end{equation*}
	
	\subsection{M/M/$c$/SET--BATCH/Dynamic}
	Let ${\pi}_{i}$ denote the stationary probability that there are $i$ busy servers, that is, ${\pi}_{i}= \sum_{j=0}^{\infty}\pi_{i,j}.$ The average number of busy servers E$[B]$ and setup servers E$[S]$, and the mean number of switches per unit time E$[Sr]$ are as follows:
\begin{align*}
	\text{E}[B]&= \sum_{i=1}^{c}i\pi_i,\\
	\text{E}[S]&= \sum_{i=0}^{c}\sum_{j=b_1}^{\infty}
	\min(\lfloor\frac{j}{b_2}\rfloor
	+\min(\lfloor\frac{j\bmod b_2}{b_1}\rfloor,1),c-i)
	\pi_{i,j},\\
	\text{E}[Sr]&= \sum_{i=0}^{c-1}\sum_{j=b_1}^{\infty}
	\min(\lfloor\frac{j}{b_2}\rfloor
	+\min(\lfloor\frac{j\bmod b_2}{b_1}\rfloor,1),c-i)
	\alpha\pi_{i,j}.
\end{align*}
	Let $L$ denote the number of jobs in the queue. Then, the mean number of jobs in the queue E$[L]$ is given by
	\begin{align*}
		\text{E}[L]= \sum_{i=0}^{c} \sum_{j=0}^{\infty}j\pi_{i,j}.
	\end{align*}
	Then from Little's law, the mean response time E[$W$] is given by
	\begin{align*}
		\text{E}[W]= \frac{\text{E}[L]}{\lambda} +\frac{1}{\mu}.
	\end{align*}
	The mean energy consumed per batch $C_b$ and the corresponding
	setup cost coefficient $C_s$ depend on the batch size, which is
	dynamic in the M/M/$c$/SET--BATCH/Dynamic model. Hence, compared with existing models in the literature, an extra step is needed to derive these costs. Let $w(i,j)$ denote the rate at which a new batch begins service at state $(i,j)$, for $j\geq b_1$, where
	\begin{equation*}
	w(i,j)=
	\alpha\min\left(
	\lfloor\frac{j}{b_2}\rfloor
	+\min\left(
	\lfloor\frac{j\bmod b_2}{b_1}\rfloor,1
	\right),c-i
	\right)+i\mu.
	\end{equation*}
	We determine the average batch size at service initiation epochs
	as follows:
	\begin{equation*}
		\text{E}[b]=
		\frac{
			\displaystyle
			\sum_{i=0}^{c}\sum_{j=b_1}^{b_2-1}
			j\pi_{i,j}w(i,j)
			+
			\sum_{i=0}^{c}\sum_{j=b_2}^{\infty}
			b_2\pi_{i,j}w(i,j)
		}{
			\displaystyle
			\sum_{i=0}^{c}\sum_{j=b_1}^{b_2-1}
			\pi_{i,j}w(i,j)
			+
			\sum_{i=0}^{c}\sum_{j=b_2}^{\infty}
			\pi_{i,j}w(i,j)
		}.
	\end{equation*}
The original expression for E$[b]$ in the published
version (DOI: 10.1016/j.peva.2025.102468) contains a typographical error in its denominator.
The correct denominator for that expression is
$1-\sum_{i=0}^{c}\sum_{j=0}^{b_1-1}\pi_{i,j}$.
The code used to generate the published figures already
used this correct denominator; therefore, the typographical
error did not affect the published numerical results.

The original expression gives a time-stationary average of the size of the batch that would be selected for service, conditional on $j\geq b_1$, whereas the revised expression gives the average batch size at service initiation epochs. We recomputed and updated all numerical results involving E$[b]$ using the revised expression for the parameter
settings considered in the paper. The resulting changes in the plotted curves are negligible. In particular, the comparisons between models in
Figs.~\ref{fig10} and ~\ref{fig11} remain unchanged. These updates do not affect the conclusions of the published version.

As in the M/M/$c$/SET--BATCH model, we have
	\begin{align*}
		C_b= C_s= 0.04419 \text{E}[b]+0.15503.
	\end{align*}
Finally, the mean power consumption $Cost$ and the ERP are derived exactly in the same way as in the M/M/$c$/SET--BATCH model.
	
\section{Numerical examples and simulations}\label{sec6}
\subsection{Computational complexity comparison}
In this section, we consider the effectiveness of our recursive algorithm for the rate matrices $R^{(i)} (i=1,2,\ldots,cb)$ in the non-homogeneous part of the M/M/$c$/SET--BATCH model. We compare the computational complexity of obtaining the rate matrices $R^{(i)}$ by (\ref{eq5}) with that of direct matrix inversion. In addition, the analysis of the M/M/$c$/SET--BATCH queue is used as an approximation for the finite-buffer case. Thus, we estimate the efficiency of our approximation compared with the exact solution for the finite-buffer case in terms of time and accuracy. We use CPU: 11th Gen Intel(R) Core(TM) i7-1185G7 @ 3.00 GHz, Windows 11 system, Python 3.10.9, to test the running time.

\begin{table}[htbp]
	\centering
	\caption{Computational complexity of calculating
		$R^{(i)}$ $(i=1,2,\ldots,cb)$ with our method
		($R$-structure) and the direct matrix inversion method.}
	\label{tb:1}
	
	\renewcommand{\arraystretch}{1.1}
	\begin{tabular}{@{}lrrrr@{}}
		\toprule
		\multicolumn{5}{l}{Time (seconds)} \\
		\midrule
		Number of servers
		& \multicolumn{2}{l}{$b=2$}
		& \multicolumn{2}{l}{$b=3$} \\
		\cmidrule(lr){2-3}\cmidrule(l){4-5}
		& $R$-structure & Direct inversion
		& $R$-structure & Direct inversion \\
		\midrule
		5   & 0.001  & 0.002   & 0.001   & 0.003    \\
		10  & 0.002  & 0.023   & 0.003   & 0.045    \\
		50  & 0.636  & 5.676   & 0.778   & 10.408   \\
		70  & 1.863  & 12.206  & 3.005   & 37.102   \\
		100 & 7.839  & 48.915  & 9.270   & 108.055  \\
		200 & 79.090 & 761.552 & 103.301 & 1657.326 \\
		\bottomrule
	\end{tabular}
\end{table}
	
	Table \ref{tb:1} shows the running times needed to calculate the matrices $R^{(i)}$ in the M/M/$c$/SET--BATCH model with batch size $b$ of 2 and 3. We observe that our method has a much shorter running time than the direct matrix inversion method, especially when the batch size and number of servers are large. Specifically, the direct inversion method takes approximately nine times longer than our method for $b=2$ and about 16 times longer for $b=3$, when $c=200$. Therefore, the larger the number of servers and the batch size, the more efficient our method is.
	
\begin{table}[htbp]
	\centering
	\caption{Computational complexity and accuracy: exact and
		approximate methods for finite-buffer cases
		($b=2, c=5, \rho=0.7$).}
	\label{tb:2}
	
	\renewcommand{\arraystretch}{1.1}
	\begin{tabular}{@{}lrrr@{}}
		\toprule
		\multirow{2}{*}{Buffer $K$}
		& \multicolumn{2}{c}{Time (seconds)}
		& \multirow{2}{*}{\shortstack{Percentage Error\\of E$[N]$ (\%)}} \\
		\cmidrule(lr){2-3}
		& Exact method & Approximation & \\
		\midrule
		20   & 0.006  & \multirow{5}{*}{0.002} & 26.307 \\
		50   & 0.011  &                       & 0.618  \\
		100  & 0.058  &                       & 0.496  \\
		500  & 4.745  &                       & 0.496  \\
		1000 & 35.946 &                       & 0.496  \\
		\bottomrule
	\end{tabular}
\end{table}
	
	Tables \ref{tb:2} to \ref{tb:4} compare the running times and the percentage errors in the performance measure E$[N]$ for our approximation and the exact method, for the finite-buffer case. Our approximation method is more efficient than the exact method and has a relatively good accuracy (error below 1\%) when the buffer size $K$ is not too small. We observe from Tables \ref{tb:3} and \ref{tb:4} that as we increase the number of servers, the running time of the exact method increases dramatically while the running time of our approximate method increases insignificantly, and the percentage error decreases. Specifically, when $K=1000$, the running time of our approximation is 16,000 times faster than the exact method while the error is only about 0.1\%, in Table \ref{tb:4}. Thus, as the number of servers and batch size increase, both the efficiency and accuracy of our approximation improves.
	
\begin{table}[htbp]
	\centering
	\caption{Computational complexity and accuracy: exact and
		approximate methods for finite-buffer cases
		($b=2, c=20, \rho=0.7$).}
	\label{tb:3}
	
	\renewcommand{\arraystretch}{1.1}
	\begin{tabular}{@{}lrrr@{}}
		\toprule
		\multirow{2}{*}{Buffer $K$}
		& \multicolumn{2}{c}{Time (seconds)}
		& \multirow{2}{*}{\shortstack{Percentage Error\\of E$[N]$ (\%)}} \\
		\cmidrule(lr){2-3}
		& Exact method & Approximation & \\
		\midrule
		50   & 0.101    & \multirow{5}{*}{0.049} & 8.732 \\
		100  & 0.627    &                       & 0.214 \\
		200  & 5.784    &                       & 0.223 \\
		500  & 89.123   &                       & 0.223 \\
		1000 & 1990.714 &                       & 0.223 \\
		\bottomrule
	\end{tabular}
\end{table}

\begin{table}[htbp]
	\centering
	\caption{Computational complexity and accuracy: exact and
		approximate methods for finite-buffer cases
		($b=3, c=20, \rho=0.7$).}
	\label{tb:4}
	
	\renewcommand{\arraystretch}{1.1}
	\begin{tabular}{@{}lrrr@{}}
		\toprule
		\multirow{2}{*}{Buffer $K$}
		& \multicolumn{2}{c}{Time (seconds)}
		& \multirow{2}{*}{\shortstack{Percentage Error\\of E$[N]$ (\%)}} \\
		\cmidrule(lr){2-3}
		& Exact method & Approximation & \\
		\midrule
		70   & 0.216    & \multirow{5}{*}{0.080} & 9.701 \\
		100  & 0.519    &                       & 0.601 \\
		200  & 5.175    &                       & 0.119 \\
		500  & 80.344   &                       & 0.119 \\
		1000 & 2139.136 &                       & 0.119 \\
		\bottomrule
	\end{tabular}
\end{table}
	
\subsection{M/M/$c$/SET--BATCH}\label{sec6.1}
In this section, we use simulations to investigate the effect
of the setup time distribution on the system performance.
Then, we present the performance measures based on our
analytical results to give insights into the batching policy
in the exponential setup time model.
In all the numerical examples, we define the traffic intensity
$\rho=\lambda/(cb\mu)$ and fix the mean service time
$1/\mu=1\,\mathrm{s}$. Under this parameter setting, including
	the factor $\mu$ in the cost expressions does not change
	the numerical values reported in the published version.
	
	\subsubsection{Effect of the setup time distribution}
	We first observe how the setup time distribution affects the performance of the M/M/$c/$SET--BATCH system. We fix the batch size $b=2,$ the number of servers $c=20$ , and the mean setup time at 10 times the mean service time, that is, $1/\alpha = 10 \text{s}$. We present analytical results for the exponential setup time model and simulations for three additional models: the deterministic setup time model, the Erlang setup time model (with shape parameter $k=9$), and the hyperexponential setup time model (with coefficient of variation $\text{CV}=1.5$, a mixture of two exponential distributions). Although all setup time distributions maintain the same mean of 10, the coefficient of variation differs. 
	
	Figs.~\ref{fig4a} and \ref{fig4b} represent the mean response time and the power consumption against the traffic intensity for four cases: exponentially distributed, hyperexponentially distributed, Erlang distributed, and deterministic setup times. We observe from Fig.~\ref{fig4a} that the mean response time decreases with $\rho$ under a light traffic regime, but increases with $\rho$ in a heavy traffic regime. This is because increasing the traffic intensity will increase the number of setup servers in a light traffic regime, reducing the mean response time. Nevertheless, in a heavy traffic regime, almost all servers are always active; therefore, the mean response time increases with the traffic intensity, as in the M/M/$c$ systems without setup time. In addition, we observe in Fig.~\ref{fig4b} that the power consumption increases with the traffic intensity for all settings of the setup time distributions as expected. 
	
	The mean response time and the power consumption decrease as the coefficient of variation CV increases and are bounded from below by the hyperexponential curve. The M/M/$c/$SET--BATCH models with deterministic and Erlang setup times exhibit higher latency and power consumption than the corresponding models with exponential and especially hyperexponential setup times. In other words, the performance of the setup models with the batching policy is better when the setup time has a larger coefficient of variation.
	\begin{figure}
		\centering
		\begin{subfigure}{0.49\linewidth}
			\includegraphics[width=\linewidth]{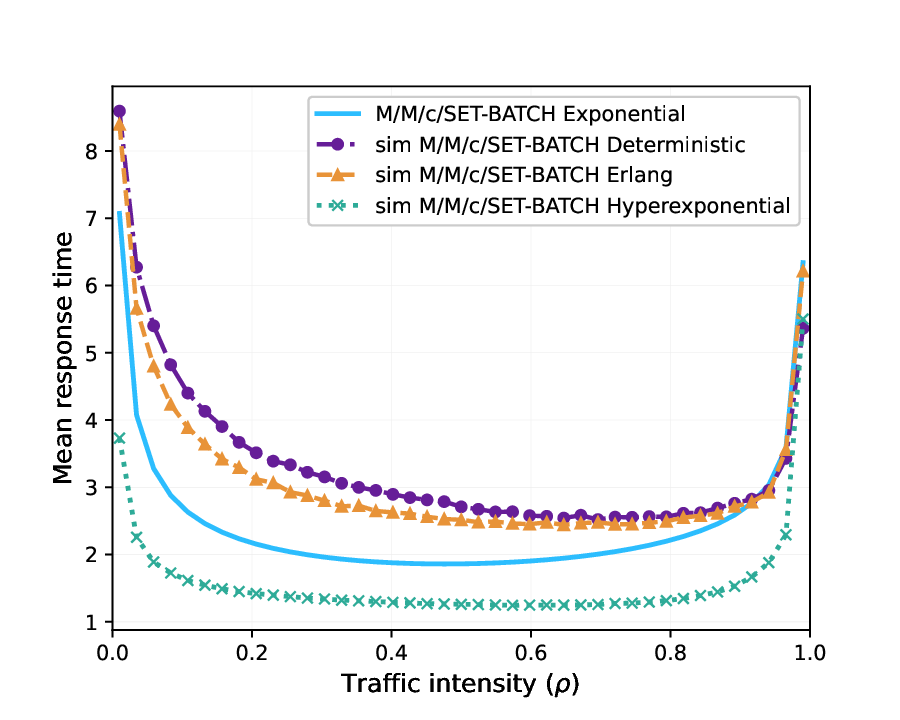}
			\caption{E$[W]$ (s)}
			\label{fig4a}
		\end{subfigure}
		\begin{subfigure}{0.49\linewidth}
			\includegraphics[width=\linewidth]{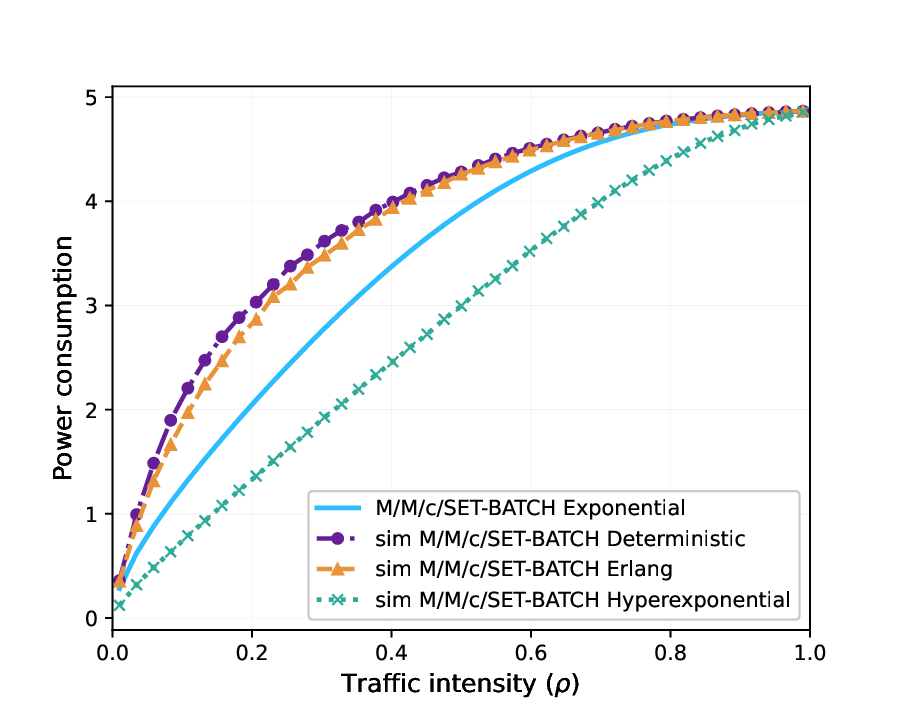}
			\caption{$Cost$ (W)}
			\label{fig4b}
		\end{subfigure}
		\caption{ The effect of the setup time distribution on system performance.}
		\label{fig4}
	\end{figure}
	
	\subsubsection{Effect of the batching policy}
	To analyze the efficiency of the batching policy in the M/M/$c/$SET--BATCH queue (BATCH model), we compare it with the M/M/$c$/Setup model, which operates without the batching policy (w/o BATCH model). For comparison, we consider the influence of the parameters $\alpha \text{ and }c$ on the system performance in both cases with and without the batching policy. We define the traffic intensity in the w/o BATCH model $\rho_{w}$ by $\rho_{w}=\lambda/(c\mu)$
	
	Fig.~\ref{fig5} shows the effect of the setup rate on the performance measures for $\lambda=6$ and 18 (or the loads of $\rho_{w}=0.3$ and 0.9). We observe that the power consumption, mean response time, and ERP decrease with the setup rate and eventually stabilize at a fixed value. Notably, when the setup rate is extremely low, the power consumption curves in two cases of $\lambda=6$ and 18 overlap, regardless of whether the batching policy is used. This occurs because when the setup time is exceptionally long, all servers are in setup mode for most of the time. The batching policy reduces the power consumption when the setup rate is not too low and increases the power consumption when the setup time is large. However, the BATCH model consistently outperforms the w/o BATCH model regarding the mean response time and ERP. As a result, our batching policy enhances system performance, especially under high arrival rates.
	
	Fig.~\ref{fig6} illustrates the effect of the number of servers on the performance measures in both cases with and without the batching policy. We plot three curves for each case with $\alpha= 0.01, 0.1, 1$ while maintaining the load at $\rho_{w}=0.9$ (i.e., 90\% load). We observe that $Cost$ and ERP increase with the number of servers while the mean response time decreases with $c$ and converges to a fixed value as expected. The BATCH model consistently outperforms the w/o BATCH model regarding mean response time and ERP, but the latter saves more power than the BATCH model for $\alpha= 0.01$. Despite consuming more energy when the setup time is exceptionally long, our BATCH model is generally superior to the model without the batching policy.
	
	Fig.~\ref{fig7} shows the performance measures plotted against the setup rate for the cases $b=2,4,8$ and $b=1$, that is, without batching policy. $Cost$ increases with the batch size when the setup rate is low, whereas increasing the batch size reduces $Cost$ when the setup time is not too long. In addition, increasing the batch size may lead to an increase in the mean response time and ERP. Hence, a batching policy with an appropriate batch size can balance the trade-off between power consumption and mean response time when the setup time is varied.
	
	\begin{figure}
		\centering
		\begin{subfigure}{0.32\linewidth}
			\includegraphics[width=\linewidth]{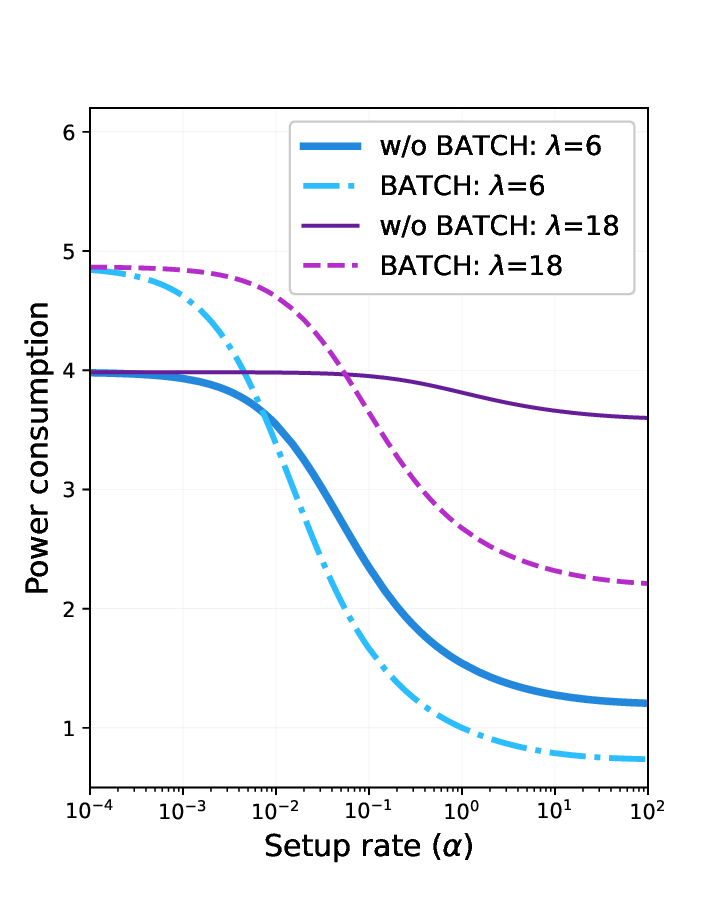}
			\caption{\textit{Cost} (W)}
			\label{fig5a}
		\end{subfigure}
		\begin{subfigure}{0.32\linewidth}
			\includegraphics[width=\linewidth]{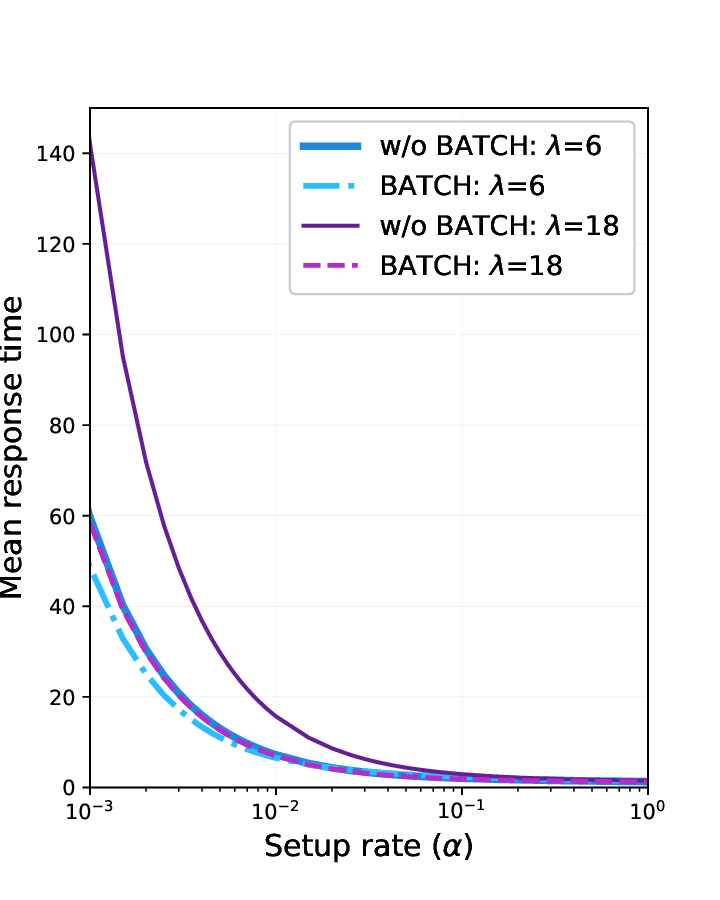}
			\caption{E$[W]$ (s)}
			\label{fig5b}
		\end{subfigure}
		\begin{subfigure}{0.32\linewidth}
			\includegraphics[width=\linewidth]{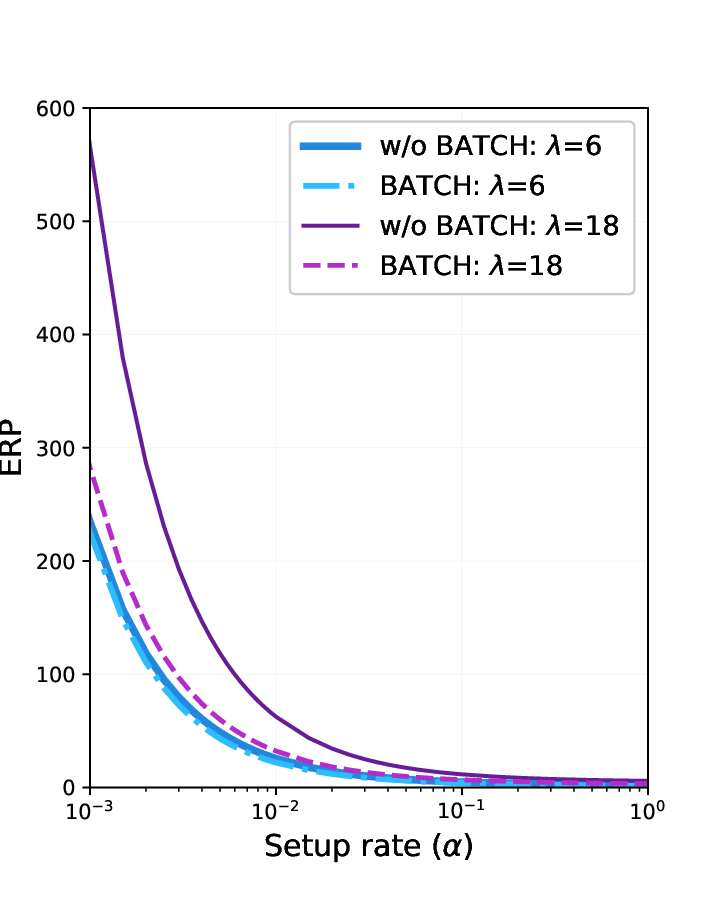}
			\caption{ERP (J)}
			\label{fig5c}
		\end{subfigure}
		\caption{ Comparison between the models without and with batching policy ($\lambda=6$ and 18).}
		\label{fig5}
	\end{figure}
	
	\begin{figure}
		\centering
		\begin{subfigure}{0.32\linewidth}
			\includegraphics[width=\linewidth]{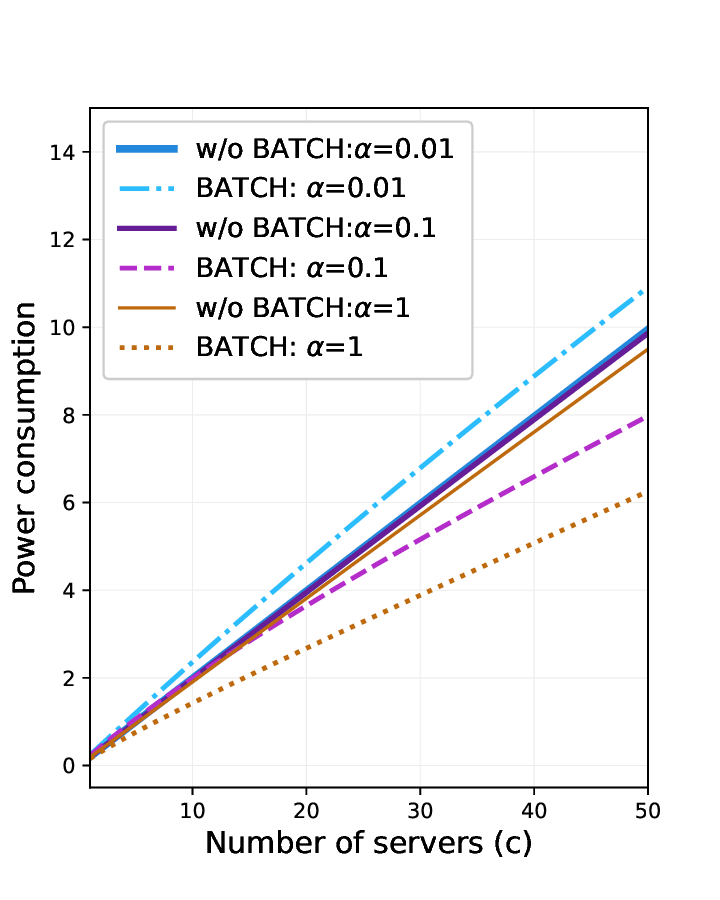}
			\caption{\textit{Cost} (W)}
			\label{fig6a}
		\end{subfigure}
		\centering
		\begin{subfigure}{0.32\linewidth}
			\includegraphics[width=\linewidth]{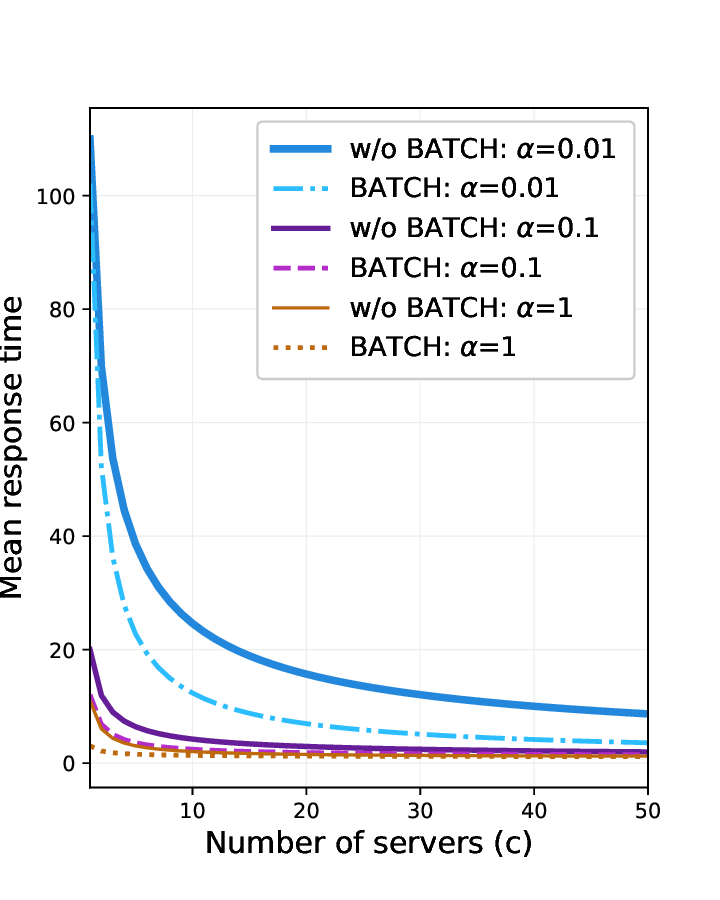}
			\caption{E$[W]$ (s)}
			\label{fig6b}
		\end{subfigure}
		\centering
		\begin{subfigure}{0.32\linewidth}
			\includegraphics[width=\linewidth]{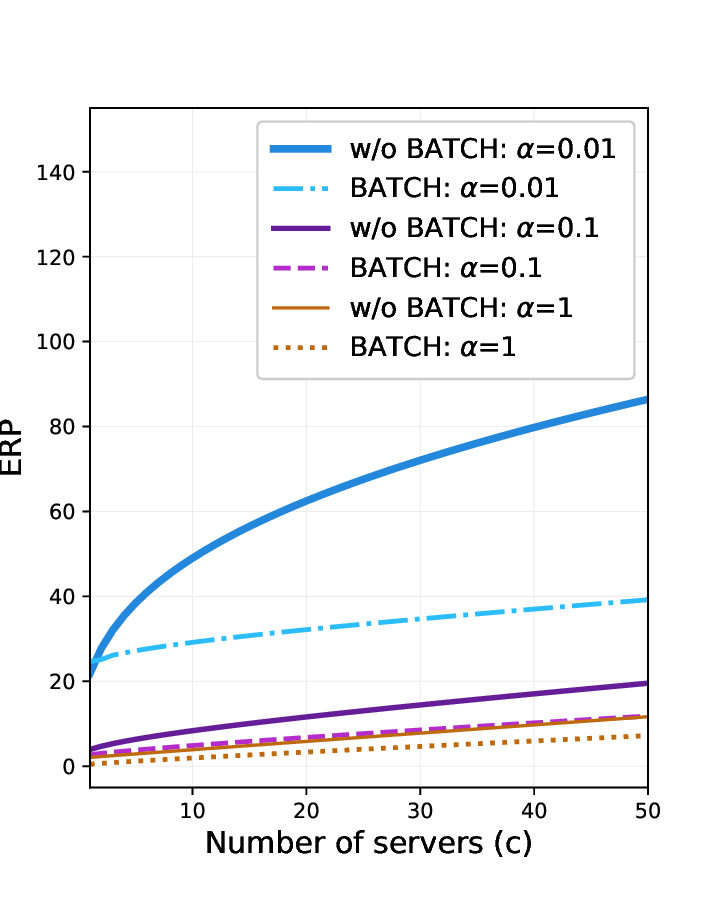}
			\caption{ERP (J)}
			\label{fig6c}
		\end{subfigure}
		\caption{ Comparison between the models without and with batching policy ($\alpha= 0.01, 0.1, 1$).}
		\label{fig6}
	\end{figure}
	
	\begin{figure}
		\centering
		\begin{subfigure}{0.32\linewidth}
			\includegraphics[width=\linewidth]{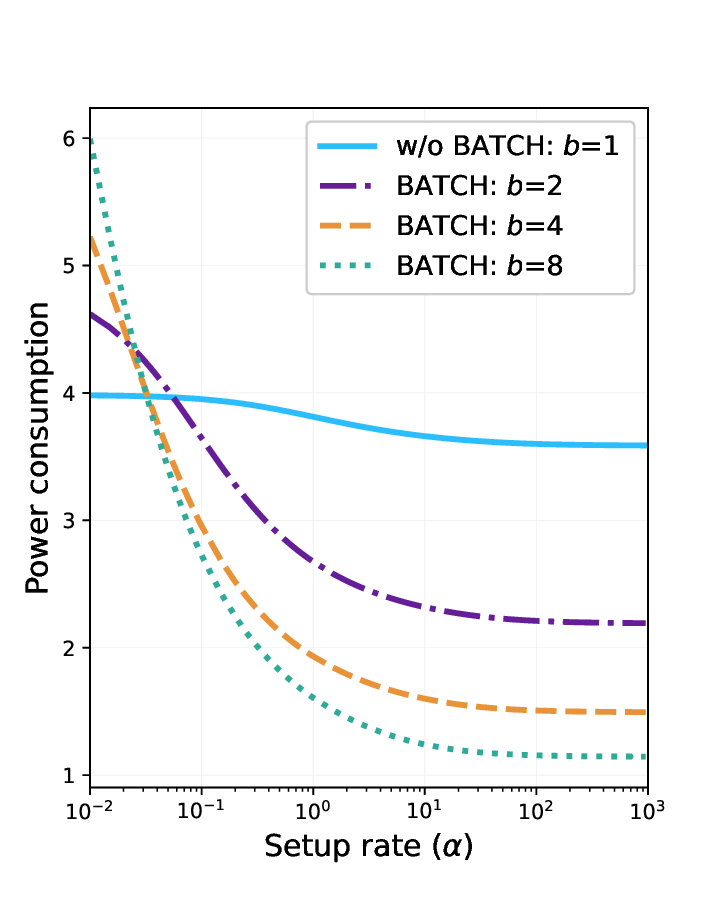}
			\caption{\textit{Cost} (W)}
			\label{fig7a}
		\end{subfigure}
		\begin{subfigure}{0.32\linewidth}
			\includegraphics[width=\linewidth]{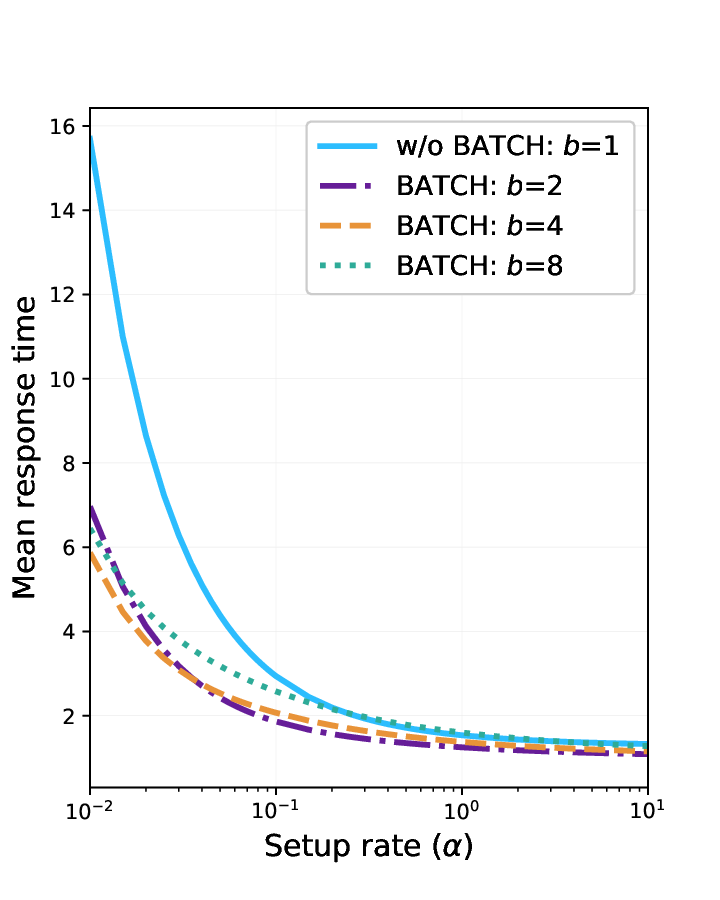}
			\caption{E$[W]$ (s)}
			\label{fig7b}
		\end{subfigure}
		\begin{subfigure}{0.32\linewidth}
			\includegraphics[width=\linewidth]{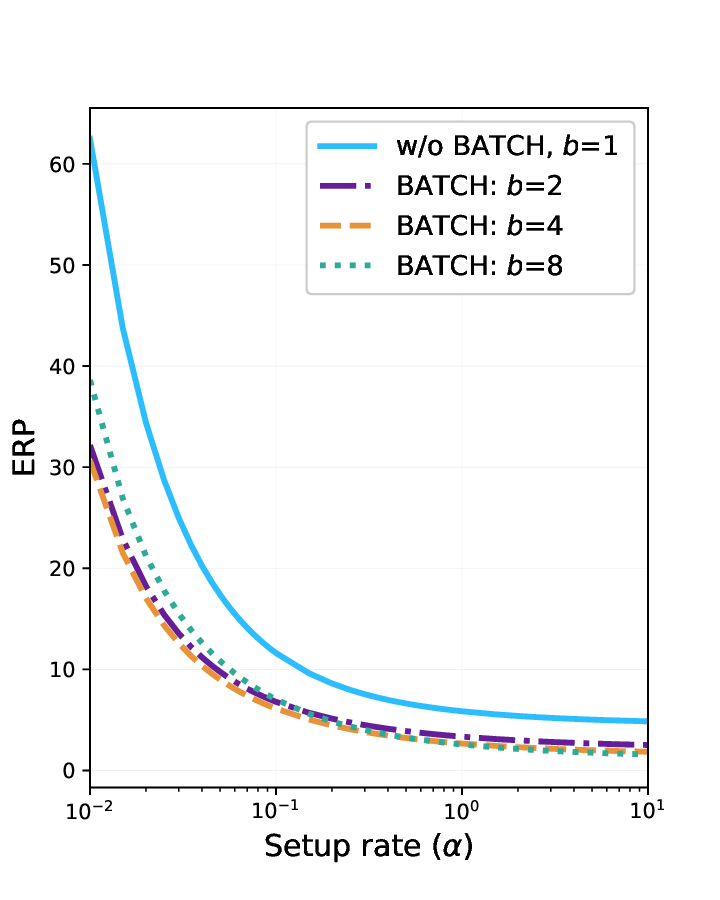}
			\caption{ERP (J)}
			\label{fig7c}
		\end{subfigure}
		\caption{ The effect of batch size on system performance ($\lambda=18$).}
		\label{fig7}
	\end{figure}
	
	\subsection{Some variant models}
	As discussed in Section \ref{sec6.1}, the M/M/$c$/SET--BATCH model generally outperforms the model without the batching policy. Therefore, in this section, we analyze the BATCH model further and compare it with its variants to find the optimal policies for enhancing system performance. In all the numerical examples, we define the traffic intensity by $\rho = \lambda/(cb\mu)$ for the M/M/$c$/SET--BATCH and  M/M/$c$/SET--BATCH/Delayoff models, and by $\rho = \lambda/(cb_2\mu)$ for the M/M/$c$/SET--BATCH/Dynamic model. We fix the mean service time $1/\mu = 1\text{s}$, the traffic intensity  $\rho=0.7$, and the number of servers $c=20$. 
	
	The performance of the delayoff policy depends on the choice of the delayoff rate $\beta$, whereas the performance of the dynamic batch-size policy depends on the batch-size range ($b_1, b_2$). Therefore, before comparing the above policies, we present how we choose the parameter $\beta$ for the M/M/$c$/SET--BATCH/Delayoff model and the batch-size range ($b_1, b_2$) for the M/M/$c$/SET--BATCH/Dynamic model.
	\subsubsection{M/M/$c$/SET--BATCH/Delayoff}
	We fix the batch size $b=5$ and the traffic intensity $\rho=0.7$, and plot three curves with $\alpha= 0.1, 1, 10$. Fig.~\ref{fig8} shows the power consumption, mean response time, and ERP plotted against the delayoff rate $\beta$ for the M/M/$c$/SET--BATCH/Delayoff model. The mean response time and ERP increase when the delay-off rate increases, as expected, due to the increase in setup time. Moreover, for delayoff rates under a certain threshold, the power consumption decreases as $\beta$ increases since we reduce the power consumption in idle mode. However, beyond this threshold, the power consumption decreases as $\beta$ decreases because we reduce the power consumption in setup mode by keeping servers idle longer. Regarding power consumption, the optimal value of $\beta$ increases with the setup rate $\alpha$, while $\beta=0$ is optimal for minimizing both the mean response time and the ERP in heavy traffic. Maccio and Down \cite{maccio2015optimal} also showed that the optimal delayoff rate is $\beta=0$ in heavy traffic and $\beta=\infty$ in light traffic for the ERWS cost metric (weighted sum of delay and power) in a similar delayoff model without batch services. In addition, Gebrehiwot et al. \cite{gebrehiwot2016optimal} proved that the optimal delayoff rate is either $\beta=0$ or $\beta=\infty$ for both the ERWS and ERP cost metrics in a single-server system even when service and setup times follow general distributions.
	\begin{figure}
		\centering
		\begin{subfigure}{0.32\linewidth}
			\includegraphics[width=\linewidth]{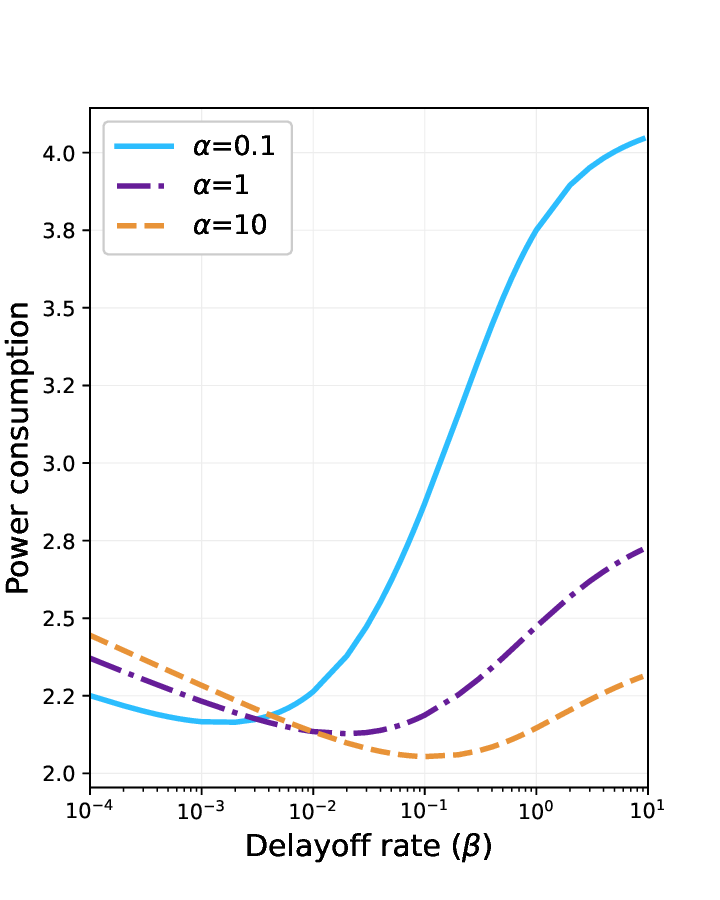}
			\caption{\textit{Cost} (W)}
			\label{fig8a}
		\end{subfigure}
		\centering
		\begin{subfigure}{0.32\linewidth}
			\includegraphics[width=\linewidth]{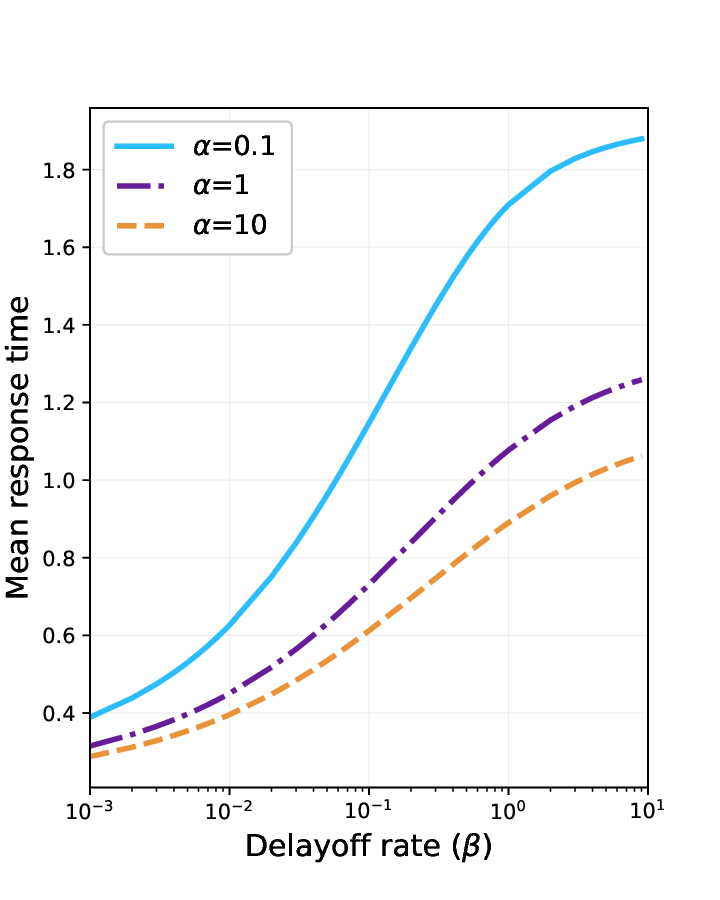}
			\caption{E$[W]$ (s)}
			\label{fig8b}
		\end{subfigure}
		\begin{subfigure}{0.32\linewidth}
			\includegraphics[width=\linewidth]{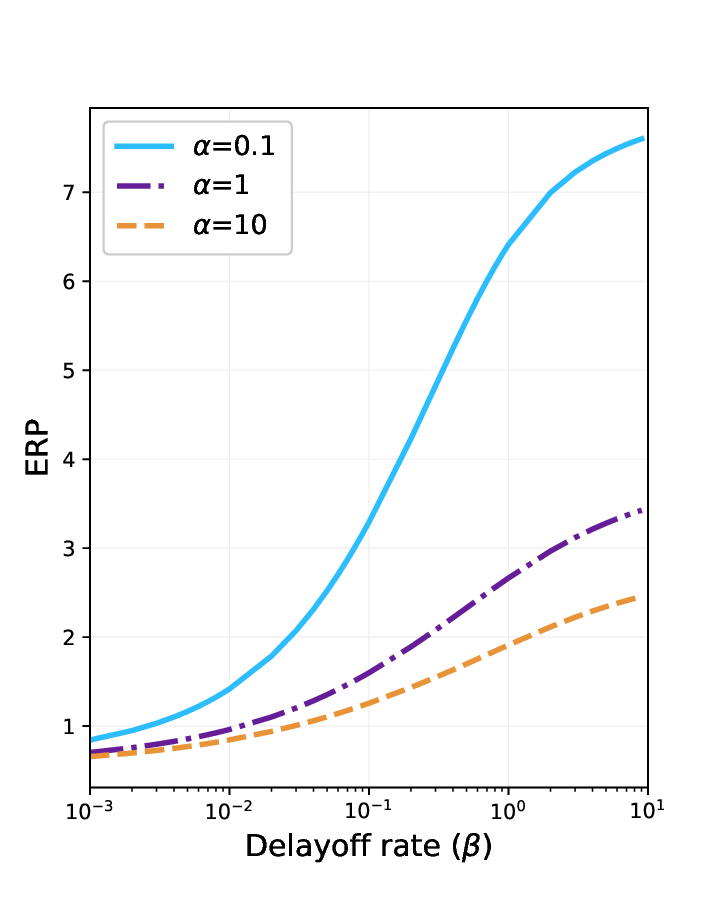}
			\caption{ERP (J)}
			\label{fig8c}
		\end{subfigure}
		\caption{System performance versus delayoff rate $\beta$.}
		\label{fig8}
	\end{figure}
	\subsubsection{M/M/$c$/SET--BATCH/Dynamic}\label{sec6.2}
	To determine the optimal batch size range ($b_1, b_2$), we fix the traffic intensity $\rho=0.7$ and the upper limit $b_2=5$, and vary the lower limit $b_1=1,2,3,4$. Fig.~\ref{fig9} illustrates the power consumption, mean response time, and ERP for the M/M/$c$/SET--BATCH/Dynamic model. For values of $\alpha$ under a threshold, the power consumption, mean response time, and ERP increase as $b_1$ increases. Thus, when the setup rate is small, it is better to set up a server whenever a job arrives. However, beyond the threshold of $\alpha$, the power consumption, E[$W$], and ERP decrease as $b_1$ increases. Therefore, the M/M/$c$/SET--BATCH model outperforms the M/M/$c$/SET--BATCH/Dynamic model when the setup rate is high.
	\begin{figure}
		\centering
		\begin{subfigure}{0.32\linewidth}
			\includegraphics[width=\linewidth]{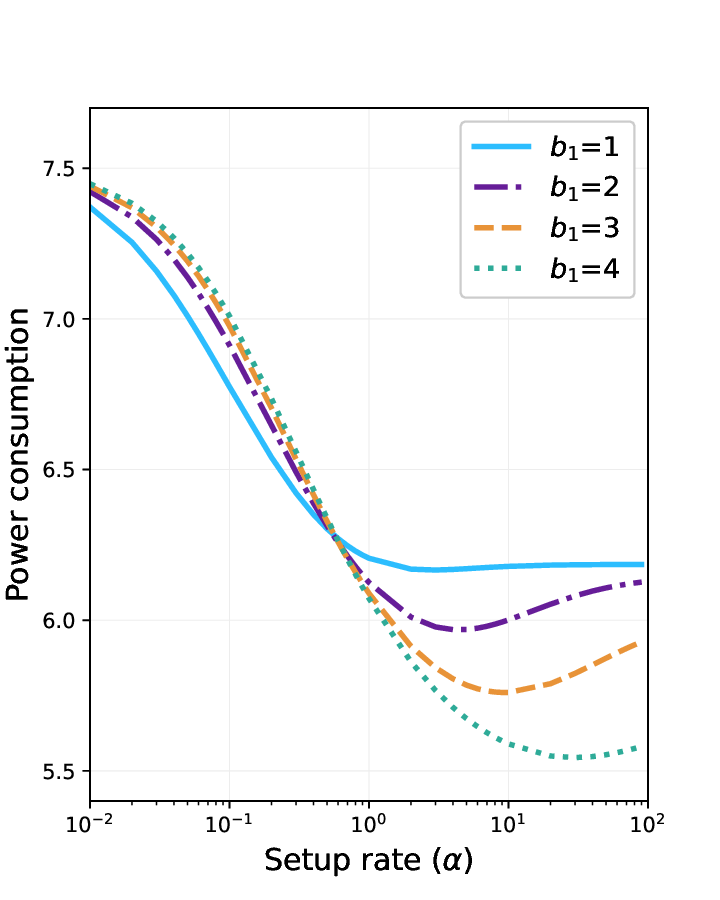}
			\caption{\textit{Cost} (W)}
			\label{fig9a}
		\end{subfigure}
		\centering
		\begin{subfigure}{0.32\linewidth}
			\includegraphics[width=\linewidth]{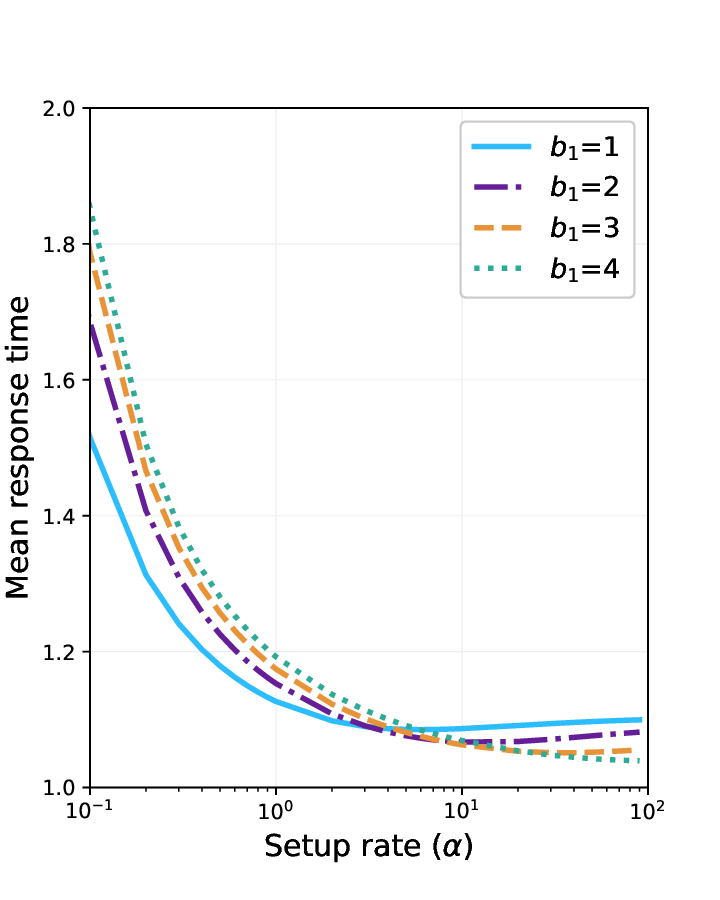}
			\caption{E$[W]$ (s)}
			\label{fig9b}
		\end{subfigure}
		\begin{subfigure}{0.32\linewidth}
			\includegraphics[width=\linewidth]{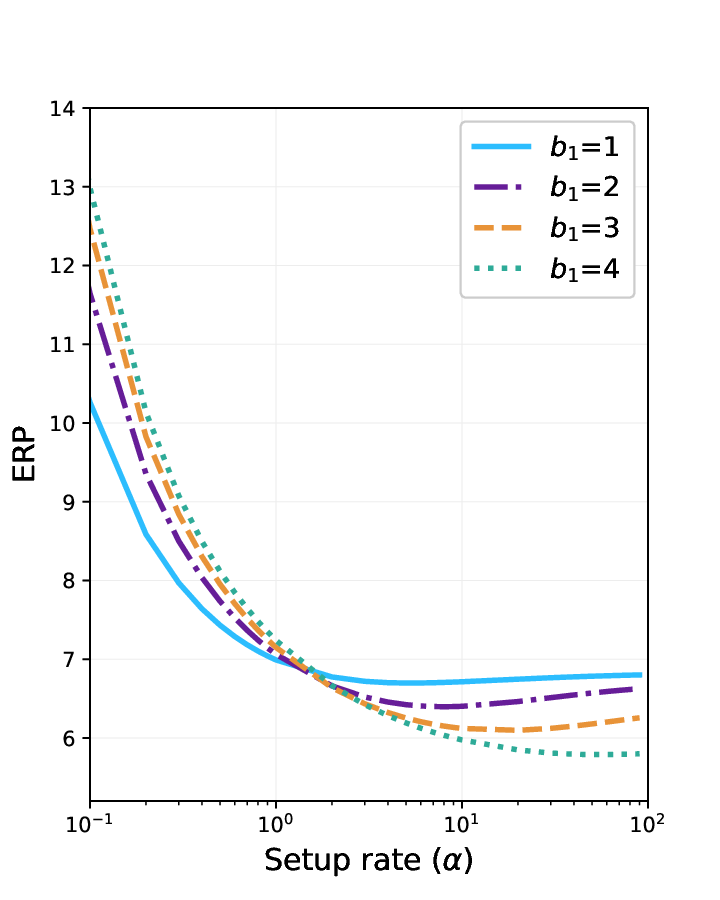}
			\caption{ERP (J)}
			\label{fig9c}
		\end{subfigure}
		\caption{System performance for various values of $b_1$.}
		\label{fig9}
	\end{figure}
	\subsubsection{Comparison of all policies}
	\begin{figure}
		\centering
		\begin{subfigure}{0.32\linewidth}
			\includegraphics[width=\linewidth]{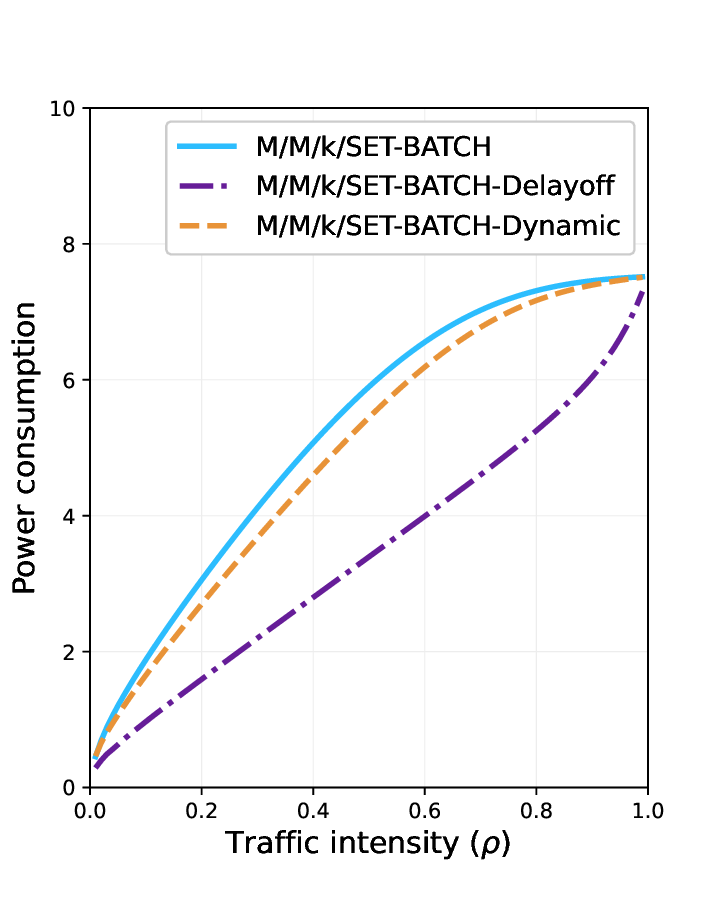}
			\caption{\textit{Cost} (W)}
			\label{fig10a}
		\end{subfigure}
		\centering
		\begin{subfigure}{0.32\linewidth}
			\includegraphics[width=\linewidth]{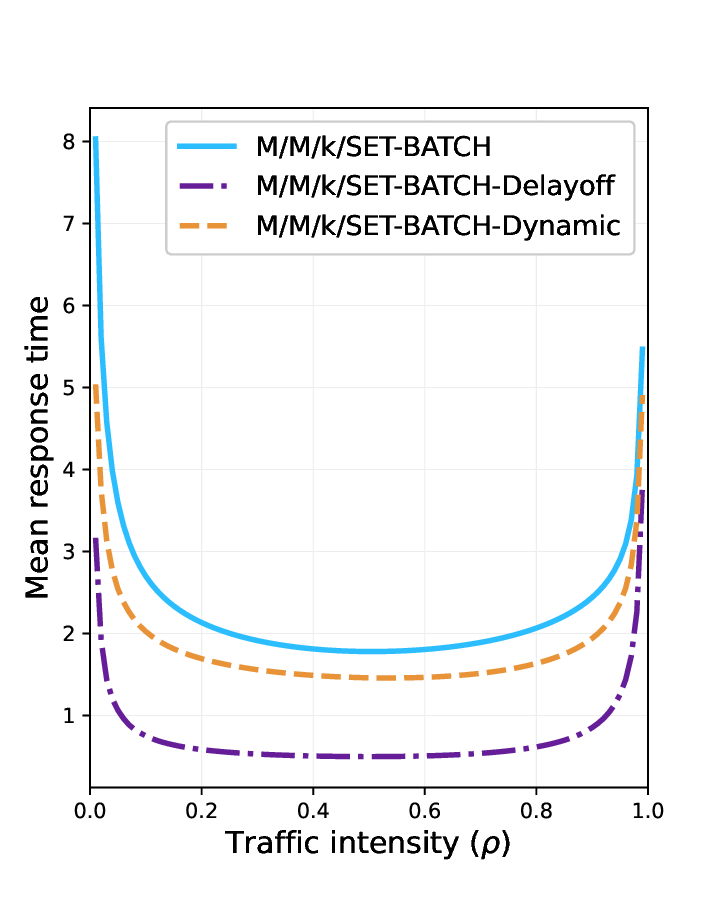}
			\caption{E$[W]$ (s)}
			\label{fig10b}
		\end{subfigure}
		\begin{subfigure}{0.32\linewidth}
			\includegraphics[width=\linewidth]{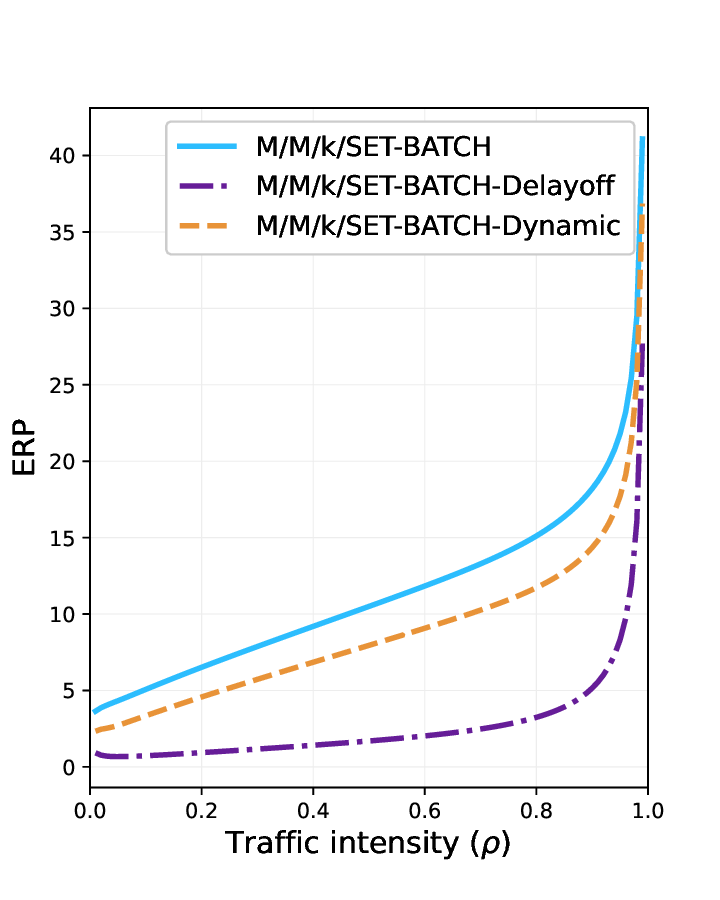}
			\caption{ERP (J)}
			\label{fig10c}
		\end{subfigure}
		\caption{System performance versus $\rho$ for all policies ($1/\alpha= 10 \text{s}$).}
		\label{fig10}
	\end{figure}
	\begin{figure}
		\centering
		\begin{subfigure}{0.32\linewidth}
			\includegraphics[width=\linewidth]{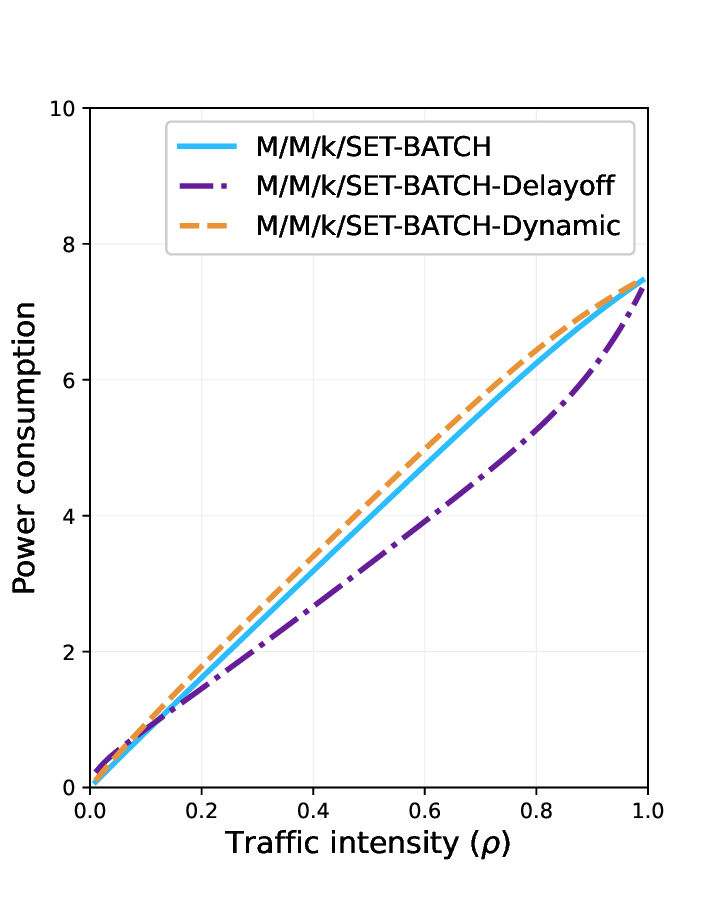}
			\caption{\textit{Cost} (W)}
			\label{fig11a}
		\end{subfigure}
		\centering
		\begin{subfigure}{0.32\linewidth}
			\includegraphics[width=\linewidth]{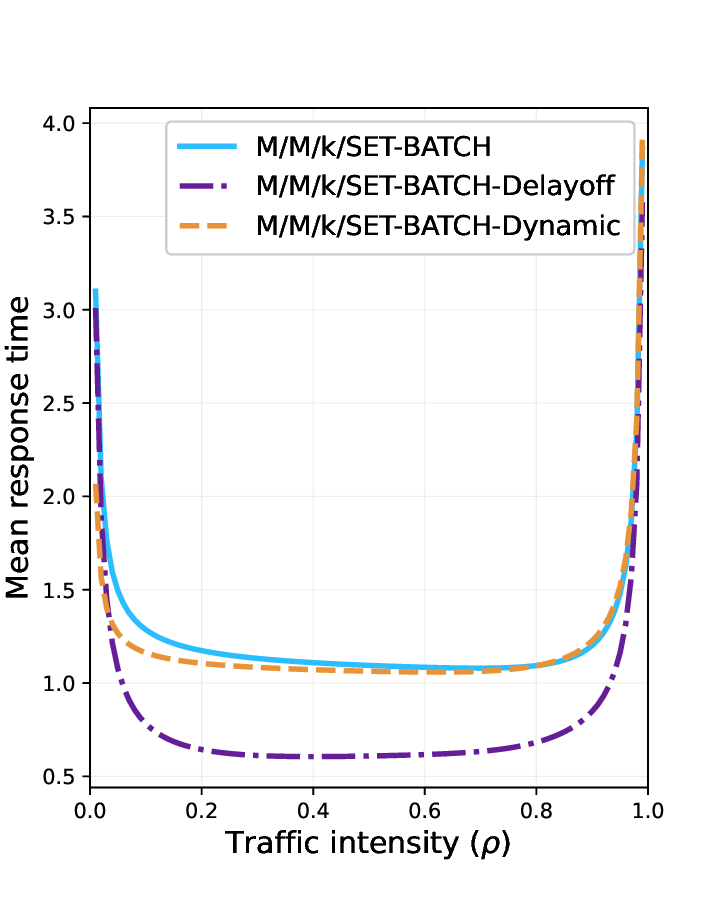}
			\caption{E$[W]$ (s)}
			\label{fig11b}
		\end{subfigure}
		\begin{subfigure}{0.32\linewidth}
			\includegraphics[width=\linewidth]{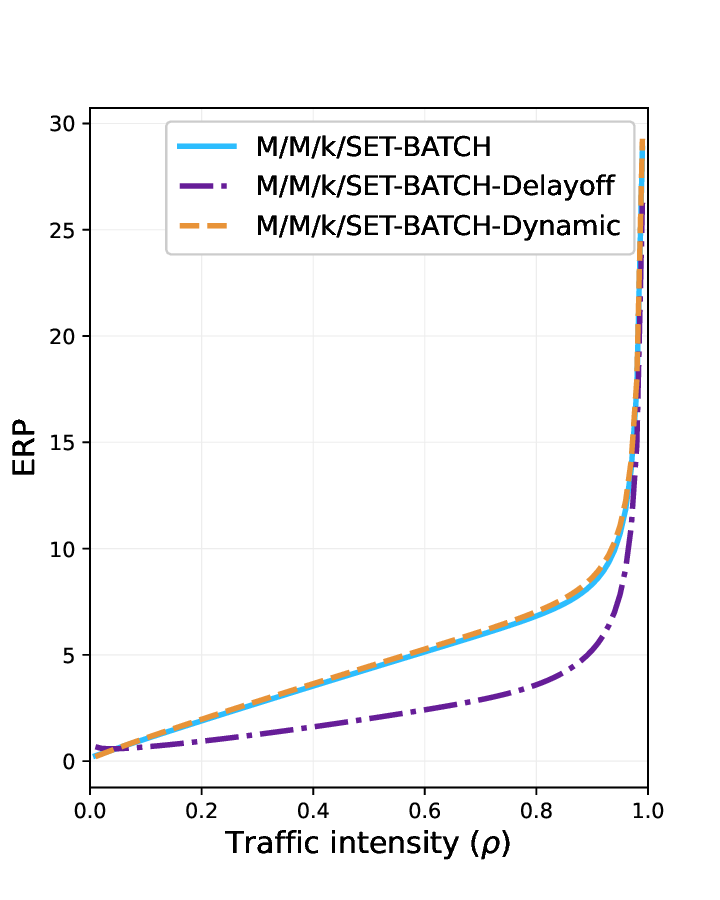}
			\caption{ERP (J)}
			\label{fig11c}
		\end{subfigure}
		\caption{System performance versus $\rho$ for all policies ($1/\alpha= 0.1 \text{s}$).}
		\label{fig11}
	\end{figure}
	
	Fig.~\ref{fig10} shows the power consumption, the mean response time, and ERP plotted against the traffic intensity $\rho$ with a mean setup time of $1/\alpha= 10 \text{s}$. In this case, we chose a mean delayoff time $1/\beta= 200 \text{s}$ and batch sizes $b=5$, $b_1= 1, b_2=5$. Comparing the M/M/$c$/SET--BATCH and M/M/$c$/SET--BATCH/Dynamic models, we observe that the M/M/$c$/SET--BATCH model has a much higher $Cost$, E[$W$], and ERP. Thus, using the dynamic batch size policy enhances the system performance of the M/M/$c$/SET--BATCH model. In addition, we observe that the M/M/$c$/SET--BATCH/Delayoff model has the lowest $Cost$, E[$W$], and ERP among all models. Thus, the M/M/$c$/SET--BATCH/Delayoff model outperforms all the other models for a small setup rate.
	
	Fig.~\ref{fig11} shows the power consumption, the mean response time, and ERP plotted against the traffic intensity $\rho$ for a mean setup time $1/\alpha= 0.1$s. In this case, we assume a mean delayoff time $1/\beta= 10$s, and batch sizes $b=5$, $b_1=3, b_2=5$. Although the M/M/$c$/SET--BATCH/Dynamic model has a lower E[$W$] than the M/M/$c$/SET--BATCH model under light traffic, it has a higher $Cost$ than the other models. Thus, the dynamic batch size is not a good policy for high setup rates, especially under high traffic. Similar to the case of $1/\alpha= 10$s, the M/M/$c/$SET--BATCH/Delayoff model is generally superior to all the other models except for the case of extremely low traffic intensity. Indeed, under a low traffic regime, delaying the shutdown of a server after service completion results in higher power consumption than immediately turning it off and restarting it when needed, especially when the setup time is short. Hence, the optimal delayoff rate to minimize the ERP under light traffic regime is $\beta = \infty$, which is consistent with the results in \cite{maccio2015optimal} and \cite{gebrehiwot2016optimal}. In addition, the longer the setup time, the higher the percentage change in system performance under different policies.
	
	\section{Conclusions}\label{sec7}
	We have presented an analysis of the M/M/$c$/Setup model with the batching policy and some of its variations, including the M/M/$c$/SET--BATCH/ Delayoff and the M/M/$c$/SET--BATCH/Dynamic models. We use a matrix analytic approach to derive an exact recursive algorithm for the stationary distribution and performance measures of the M/M/$c$/SET--BATCH model, whereas a matrix geometric method is used for the other models. Notably, in the M/M/$c$/SET--BATCH model, the special structure of the rate matrices allows us to obtain them with a lower computational complexity. In addition, we used simulations to analyze some variations of the M/M/$c$/SET--BATCH model involving various setup time distributions: hyperexponential, Erlang, and deterministic setup time models. The simulations show that variable setup times give better results for the setup models with the batching policy. Comparing the M/M/$c$/SET--BATCH and M/M/$c$/Setup models, we see that the batching policy enhances system performance, especially under high arrival rates and prolonged setup times. Finally, the dynamic batch size policy improves the system performance of the M/M/$c$/SET--BATCH model when the setup time is long, and the delayoff policy generally outperforms all the other policies except for the case of short setup time and extremely low traffic intensity. In such cases, maintaining idle servers after service completion is unnecessary, making the M/M/$c$/SET--BATCH model a more suitable choice. Although we have only considered the policies independently, combining them to enhance the system performance is possible.
	
	\section*{CRediT authorship contribution statement}
	\textbf{Thu Le-Anh:} Writing -- review \& editing, Writing -- original draft, Visualization, Validation, Software, Methodology, Investigation, Formal analysis, Data curation, Conceptualization. \textbf{Tuan Phung-Duc:} Writing -- review \& editing, Validation, Supervision, Resources, Project administration, Methodology, Investigation, Funding acquisition, Formal analysis, Conceptualization.
	
	\section*{Declaration of competing interest}
	We do not have conflicts to declare.
	
	\section*{Acknowledgments}
	The research of Tuan Phung-Duc was supported in part by JSPS KAKENHI, Japan, Grant Number 21K11765 and by F-MIRAI: R and D Center for Frontiers of MIRAI in Policy and Technology, Japan, the University of Tsukuba and Toyota Motor Corporation collaborative R and D center, Japan.
	
	\appendix
	\section{Proofs of Section \ref{sec4}}
	
	\subsection{Proof of Lemma \ref{lem:1}} \label{Ad:1}
	\begin{proof} 
		We now state Rouché’s theorem \cite{adan2006application}, which we will need to prove Lemma \ref{lem:1}.
		
		\begin{theorem}[Rouché’s Theorem]
			Let the bounded region $D$ have a simple closed contour C as its boundary. Let $f(z)$ and $g(z)$ be analytic both in $D$ and on $C$. Assume that $\lvert f(z) \rvert < \lvert g(z) \rvert$ on $C$. Then $f(z) - g(z)$ and $g(z)$ have the same number of zeros in $D$; all zeros are counted according to their multiplicity.
		\end{theorem}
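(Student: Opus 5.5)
The plan is to prove the stated form of Rouché's theorem with the argument principle and a homotopy between $g$ and $g-f$. Throughout, ``analytic in $D$ and on $C$'' is read as analytic on an open neighbourhood $U$ of $\overline{D}=D\cup C$. Three facts set the stage:
\begin{itemize}
\item the hypothesis $\lvert f(z)\rvert<\lvert g(z)\rvert$ on $C$ forces $g(z)\neq 0$ on $C$;
\item $g$ is therefore not identically zero on the component of $U$ containing $\overline{D}$;
\item by the identity theorem the zeros of $g$ are isolated, so only finitely many lie in the compact set $\overline{D}$, and none of them lie on $C$.
\end{itemize}
Hence the count ``number of zeros in $D$ with multiplicity'' is a well-defined finite number. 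By the argument principle it equals $\frac{1}{2\pi i}\oint_C g'(z)/g(z)\,dz$.

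Introduce the family $h_t(z)=g(z)-t f(z)$ for $t\in[0,1]$, so that $h_0=g$ and $h_1=g-f$. The key estimate holds on $C$: for every $t\in[0,1]$,
\begin{equation*}
\lvert h_t(z)\rvert \ge \lvert g(z)\rvert - t\lvert f(z)\rvert \ge \lvert g(z)\rvert-\lvert f(z)\rvert>0 .
\end{equation*}
Since $C$ is compact and $\lvert g\rvert-\lvert f\rvert$ is continuous and positive on $C$, this lower bound is in fact uniform: $\lvert h_t\rvert\ge\delta>0$ on $C\times[0,1]$. Consequently each $h_t$ is nonvanishing on $C$, and by the same finiteness argument it has finitely many zeros in $D$. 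Define
\begin{equation*}
N(t)=\frac{1}{2\pi i}\oint_C \frac{g'(z)-t f'(z)}{g(z)-t f(z)}\,dz .
\end{equation*}
By the argument principle, $N(t)$ is the number of zeros of $h_t$ in $D$ counted with multiplicity, so it is integer-valued.

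Next, show that $N$ is continuous in $t$. The integrand is jointly continuous on the compact set $C\times[0,1]$ because its denominator is bounded below by $\delta$. It is therefore uniformly continuous there, and integrating over the rectifiable contour $C$ gives continuity of $N$. A continuous integer-valued function on the connected interval $[0,1]$ is constant, so $N(0)=N(1)$. In words, $g$ and $g-f$ have the same number of zeros in $D$, which is exactly the statement.

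The only step needing care is the continuity of $N(t)$. That is why the uniform lower bound $\delta$ is extracted from compactness rather than used only pointwise; with $\delta$ in hand, the rest is routine. The specific application to Lemma~\ref{lem:1} is not part of this statement. There one would presumably take $g(z)=(\lambda+i\mu+(c-i)\alpha)z$ and $f(z)=\lambda+i\mu z^{b+1}$ on a circle $\lvert z\rvert=1-\epsilon$.
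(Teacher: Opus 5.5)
Your proof is correct and complete. The paper gives no proof of this statement: it quotes Rouch\'e's theorem from \cite{adan2006application} and uses it as a black box in the proof of Lemma~\ref{lem:1}. Your argument therefore supplies what the paper takes for granted rather than competing with a proof in the paper.

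The points that need care are all handled:
\begin{itemize}
\item $g\neq 0$ on $C$;
\item the zero sets of $g$ and of each $h_t$ are finite;
\item the uniform bound $\lvert h_t\rvert\ge\delta$ on $C\times[0,1]$, which is what makes $N(t)$ continuous;
\item integrality of $N(t)$, via the argument principle.
\end{itemize}
The statement says $f-g$ while you prove it for $g-f$; this is immaterial, since the two differ only by a sign.

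The other classical route factors $g-f=g\,(1-f/g)$. Then $N(g-f)-N(g)$ is the winding number about $0$ of the image of $C$ under $1-f/g$. That winding number vanishes because the image lies in the disc $\lvert w-1\rvert<1$. This route is slightly shorter and avoids the continuity-in-$t$ step. Your homotopy version instead isolates the underlying mechanism: the zero count is invariant under a deformation that never vanishes on $C$.

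Your closing remark on the application is also sharper than the paper's. On $\lvert z\rvert=1$ the strict inequality $\lvert f\rvert<\lvert g\rvert$ holds only for $i<c$. For $i=c$ equality occurs at $z=1$, which is itself a root of (\ref{eq:2}). So a circle $\lvert z\rvert=1-\epsilon$ is genuinely needed in that case, and the required inequality for small $\epsilon>0$ reduces precisely to the stability condition $\lambda<cb\mu$.
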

		Let $ f(z)=\lambda+i\mu z^{b+1}$, $g(z)= (\lambda+ i\mu + (c-i)\alpha) z$, and $h(z)= f(z) - g(z)$. Accordingly, $f(z)$ and $g(z)$ are analytic on the unit disk, $ \lvert z \rvert \leq 1$. In addition, $\lvert f(z) \rvert <\lvert g(z) \rvert$ on the unit circle, $ \lvert z \rvert = 1$. 
		
		Thus $h(z)$ has one zero, as does $g(z)$, on the unit disk. As a result, Eq.~(\ref{eq:2}) has a unique root in (0,1).
	\end{proof}
	
	\subsection{Proof of Lemma \ref{lem:2}} \label{Ad:2}
	\begin{proof}
		We use induction to prove that the recursive sequence $\{x_n\}$ is monotonically increasing, i.e.,
		\begin{align*}
			x_{n+1}> x_{n }, \quad \text{for} \quad n\in \mathbb{N}. 
		\end{align*}
		For the case $n=0$, we have 
		\begin{align*}
			x_{1} = p \geq x_{0}.
		\end{align*}
		Thus, the statement holds in the case $n=0$. Assume that the statement holds for some positive integer $k$. Then,
		\begin{align*}
			x_{k+2}-x_{k+1}= q( x_{k+1}^{b+1}-x_{k}^{b+1}) > 0,
		\end{align*}
		since $x_{k+1} > x_k$ by the induction hypothesis. Hence, $x_{n+1} > x_n$ for all positive integers $n$. 
		
		Next, we use induction again to prove that the sequence is bounded above by 1. The hypothesis holds for $n = 0$ since $x_1 = p < 1$. Assume that $x_k < 1$ for all positive integers up to $k$. Then,
		\begin{align*}
			x_{k+1} = p+ q x_{k}^{b+1} < p+q < 1.
		\end{align*}
		Hence, $x_n < 1$ for all positive integers $n$.
	\end{proof}
	
	\section*{Data availability}
	Data will be made available on request.

\end{document}